\pgfplotsset{compat=newest}  %
\newcommand{\multiline}[1]{%
  \begin{tabularx}{\dimexpr\linewidth-\ALG@thistlm}[t]{@{}X@{}}
    #1
  \end{tabularx}
}
\newtheorem{theorem}{Theorem}
\newtheorem{lemma}[theorem]{Lemma}
\newtheorem{assumption}{Assumption}
\newcounter{remark}
\newenvironment{remark}{%
\par\vspace{3pt}\noindent\refstepcounter{remark}\textbf{Remark~\theremark:}}%
{\par\endtrivlist\unskip}
\newcounter{problem}
{\par\endtrivlist\unskip}
\def\linGP{linGP\xspace}
\newcommand{\GPModel}{\ensuremath{\GGG}}
\newcommand{\GPmean}{\ensuremath{\mu}}
\title{Distributed Experiment Design and Control for Multi-agent Systems with Gaussian Processes}
\author{Viet-Anh Le and Truong X. Nghiem\\
  School of Informatics, Computing, and Cyber Systems\\
  Northern Arizona University\\
  \{vl385,truong.nghiem\}@nau.edu%
}
\begin{document}

\maketitle
\thispagestyle{empty}
\pagestyle{empty}

\begin{abstract}
  This paper focuses on distributed learning-based control of decentralized multi-agent systems where the agents' dynamics are modeled by Gaussian Processes (GPs).
  Two fundamental problems are considered: the optimal design of experiment for concurrent learning of the agents' GP models, and the distributed coordination given the learned models.
  Using a Distributed Model Predictive Control (DMPC) approach, the two problems are formulated as distributed optimization problems, where each agent's sub-problem includes both local and shared objectives and constraints.
  To solve the resulting complex and non-convex DMPC problems efficiently, we develop an algorithm called Alternating Direction Method of Multipliers with Convexification (ADMM-C) that combines a distributed ADMM algorithm and a Sequential Convexification method.
  The computational efficiency of our proposed method comes from the facts that the computation for solving the DMPC problem is distributed to all agents and that efficient convex optimization solvers are used at the agents for solving the convexified sub-problems.
  We also prove that, under some technical assumptions, the ADMM-C algorithm converges to a stationary point of the penalized optimization problem.
  The effectiveness of our approach is demonstrated in numerical simulations of a multi-vehicle formation control example.
\end{abstract}

\section{Inroduction}
\label{sec:intro}

Multi-agent control systems have been studied extensively in recent decades due to their increasing number of applications such as building energy networks, smart grids, robotic swarms, and wireless sensor networks.
The majority of control methods designed for single systems cannot be easily extended to multi-agent control systems due to additional challenges such as the combination of global and local tasks, limited communication and computation capabilities, and privacy requirements that limit information sharing between agents \cite{cao2012overview}.
While the centralized approach where a coordinator is available to coordinate and manipulate all agents, either with distributed computation or not, facilitates the communication and information sharing between agents, it does not scale reasonably with a large number of agents due to physical constraints such as short communication ranges, or the limited number of connections to the coordinator. 
For this reason, recent studies have been focused on decentralized multi-agent control systems, in which the coordinator is eliminated and each agent in the network can communicate and collaborate with a few other agents, called neighbors, to achieve the desired control objectives.

Among various control methods for single dynamical systems, Model Predictive Control (MPC) is an advanced control technique that has been widely adapted to multi-agent systems due to its flexibility and efficiency in handling multiple control objectives and constraints. 
The extension of MPC for multi-agent systems is widely known as Distributed MPC (DMPC) \cite{negenborn2014distributed}.
To solve a DMPC problem in a distributed manner, distributed optimization algorithms are commonly used. 
In \cite{raffard2004distributed}, the authors consider an optimization control problem of flight formation and develop an algorithm to solve it based on dual decomposition techniques. 
In \cite{summers2012distributed}, the Alternating Direction Method of Multipliers (ADMM) %
was utilized for solving a DMPC problem. 
In \cite{conte2012computational}, the authors provided a computational study on the performance of two distributed optimization algorithms, the dual decomposition based on fast gradient updates (DDFG) and the ADMM, for DMPC problems.
Some other distributed optimization algorithms used for DMPC are fast alternating minimization algorithm (FAMA) and inexact FAMA \cite{pu2014inexact}, inexact Proximal Gradient Method and its accelerated variant \cite{pu2015quantization}. 
In terms of applications, DMPC has been applied for numerous practical multi-agent systems such as robotic swarms \cite{van2017distributed,luis2020online}, and building energy networks \cite{ma2011distributed,hou2017distributed}.

In the above works, the dynamics of all agents are assumed to be available and sufficiently precise.
However, for many complex dynamical systems, accurately modeling the system dynamics based on physics is often not straightforward due to the existence of uncertainties and ignored dynamical parts. 
This challenge motivated us to develop learning-based DMPC for multi-agent systems in our previous paper \cite{le2020gaussian}, where Gaussian Processes (GPs) \cite{williams2006gaussian} were employed to learn the agent non-linear dynamics resulting in a GP-based DMPC (GP-DMPC) problem.
To solve the GP-DMPC problem, a distributed optimization algorithm, called \linGP-SCP-ADMM, was developed to solve the GP-DMPC problem efficiently. %
However, in \cite{le2020gaussian}, we assumed that the GP dynamics of all agents are identical and available, which may not hold in practical applications since each agent has its own dynamics or system parameters.
The problem pertaining to how to obtain training datasets for all agents \emph{in one experiment} was thus not addressed. 
Moreover, the convergence properties of the \linGP-SCP-ADMM algorithm was not analyzed in our work.

Therefore, in this paper, we formulate a GP-DMPC problem that covers two fundamental problems of learning-based control for decentralized multi-agent systems, namely experiment design and coordination problems. 
In the experiment design problem, we utilize the receding horizon active learning approach \cite{le2021receding} with exact conditional differential entropy to include individual learning objectives into the DMPC problem.  
To solve the non-convex and complex GP-DMPC problem, we develop a new algorithm called ADMM with Convexification (ADMM-C) that combine the distributed ADMM optimization method and Sequential Convexification Programming (SCP) technique \cite{mao2018successive}.
Note that the ADMM-C is different from the \linGP-SCP-ADMM presented in our previous work \cite{le2020gaussian}.
In \linGP-SCP-ADMM, at each iteration, we used the \textit{linearized Gaussian Process} (\linGP) \cite{nghiem2019linearized} and SCP method to form a convex GP-DMPC subproblem that can be solved by convex distributed ADMM algorithm \cite{boyd2011distributed}, but this method is not applicable for the problem considered in this paper, where the active learning objective is included.
Meanwhile, the ADMM-C in this paper is a variant of the ADMM algorithm for non-convex and non-smooth optimization \cite{wang2019global} where the convexification technique is used to solve the non-convex local subproblems at each ADMM iteration.
In addition, the \linGP-SCP-ADMM algorithm was dedicated for the multi-agent system with a coordinator, whereas ADMM-C in this paper is designed for decentralized systems.   
Under some technical assumptions, we prove that the ADMM-C algorithm converges to a stationary point of the penalized GP-DMPC problem.
The effectiveness of our algorithm is demonstrated in a simulation case study of experiment design and formation control problem for a multi-vehicle system.

The remainder of this paper is organized as follows. 
The GP-DMPC formulation for distributed experiment design and coordination of a multi-agent system is introduced in Section~\ref{sec:problem}. 
Our proposed ADMM-C algorithm is presented in Section~\ref{sec:algorithm} and the simulation results are reported and discussed in Section~\ref{sec:simulation}. 
Finally, Section~\ref{sec:conclusion} concludes the paper with a summary and some future directions.

\section{Problem Formulation}
\label{sec:problem}

This section introduces a Gaussian Process-based Distributed Model Predictive Control (GP-DMPC) formulation for distributed experiment design and control problems of a multi-agent system, in which Gaussian Processes (GPs) are employed to represent the agent dynamics.
Our problem formulation covers two fundamental problems: (1) the \textit{multi-agent experiment design problem} based on active learning where each agent explores the state-space to collect informative data for system identification while guaranteeing certain cooperative objectives and constraints with other agents, and (2) the \textit{multi-agent coordination problem} in which the agents cooperate to achieve both local and shared objectives using the obtained GP dynamic models.

Consider a decentralized multi-agent control system involving $M$ dynamical agents.
We assume bidirectional communication between the agents, \ie if agent $i$ can communicate with agent $j$, then agent $j$ can communicate with agent $i$. 
Consequently, the communication between agents in this network is described by an undirected graph $G = (\VVV, \EEE)$ where $\VVV = \{ 1, 2, \dots, M\}$ is the vertex set representing the agents, and $\EEE$ is the edge set defining the connections between pairs of agents, \ie $(i, j) \in \EEE$ means that agents $i$ and $j$ are neighbors.
Moreover, we define $\NNN_i = \{ j | (i,j) \in \EEE\}$ as the set of agent $i$'s neighbors (we assume that $i \in \NNN_i$) and the number of elements in the set $\NNN_i$ is denoted by $| \NNN_i |$.

For every agent $i$, we define its vector of control inputs as $\mathbf{u}_{i} \in \RR^{n_{u,i}}$, 
its vector of GP output variables as $\mathbf{y}_i \in \RR^{n_{y,i}}$,
and its vector of non-GP variables as $\mathbf{z}_i \in \RR^{n_{z,i}}$.
For any variable $\square_i$ of agent $i$, where $\square$ is $\mathbf{y}$, $\mathbf{z}$, or $\mathbf{u}$, let $\square_{i,k}$ denote its value at time step $k$.
The GP dynamics of agent $i$ express $\mathbf{y}_i$ as $\mathbf{y}_{i,k} \sim \GPModel_i(\mathbf{x}_{i,k}; \mathrm{m}_i, \mathrm{k}_i)$, where $\GPModel_i(\cdot; \mathrm{m}_i, \mathrm{k}_i)$ is a GP with mean function $\mathrm{m}_i$ and covariance function $\mathrm{k}_i$.
The input vector $\mathbf{x}_{i,k}$ of the GP is formed from current and past values of the control inputs $\mathbf{u}_{i,\tau}$ and non-GP states $\mathbf{z}_{i,\tau}$, for $\tau \leq k$, as well as from past GP outputs $\mathbf{y}_{i,\tau}$, for $\tau < k$.
Given an input $\mathbf{x}_{i,k}$, let $\mathbf{\bar{y}}_{i,k} = \GPmean_i(\mathbf{x}_{i,k})$ denote the predicted mean of the GP model $\GPModel_i(\cdot; \mathrm{m}_i, \mathrm{k}_i)$ at $\mathbf{x}_{i,k}$.
Note that in this paper, we only utilize the GP means without uncertainty propagation to represent the predicted values of the nonlinear dynamics.
More details on GP regression for dynamics and control can be found in \cite{jain2018learning,beckers2019stable}.

Let $H > 0$ be the length of the MPC horizon, $t$ be the current time step and $\III_t = \{ t, \dots, t+H-1 \}$ be the set of all time steps in the MPC horizon at time step $t$.
Denote $\bar{\YYY}_{i,t} = \{ \mathbf{\bar{y}}_{i,k} | k \in \III_t \}$, $\ZZZ_{i,t} = \{ \mathbf{z}_{i,k} | k \in \III_t \}$, $\UUU_{i,t} = \{ \mathbf{u}_{i,k} | k \in \III_t \}$, and $\XXX_{i,t} = \{ \mathbf{x}_{i,k} | k \in \III_t \}$ as the sets collecting the predicted GP output means, the non-GP states, the control inputs, and the GP inputs of agent $i$ over the MPC horizon.
For each agent $i$, we define the concatenated vectors $\square_{\NNN_i,k}$ of local variables $\square_{j,k}$ of all agents $j \in \NNN_i$, where $\square$ is $\mathbf{\bar{y}}$, $\mathbf{z}$, and $\mathbf{u}$.
Correspondingly, the collections $\bar{\YYY}_{\NNN_i,t}$, $\ZZZ_{\NNN_i,t}$, and $\UUU_{\NNN_i,t}$ of $\mathbf{\bar{y}}_{\NNN_i,k}$, $\mathbf{z}_{\NNN_i,k}$, and $\mathbf{u}_{\NNN_i,k}$ over the MPC horizon are defined.
We also utilize the notation $[X]$ to denote the vector concatenation of all vectors in a set $X$ (\eg $[\XXX_i] = [\mathbf{x}_{i,k}^T]_{k \in \III_t}^T$).

To facilitate the problem formulation, we present the $H$ mean equations of GP dynamics in the current control horizon 
as $\mu_{i,l} (\bar{\YYY}_{i}, \XXX_i) = 0$ for $l \in \III_t$.
Moreover, for each agent $i$, let's define $\HHH_i(\XXX_i)$ as an individual objective for active learning, 
$J_i (\bar{\YYY}_{i}, \UUU_{i}, \ZZZ_{i})$ as a local control objective, 
$J_{\NNN_i} (\bar{\YYY}_{\NNN_i}, \UUU_{\NNN_i}, \ZZZ_{\NNN_i})$ as a shared objective,  
$g_{i,l}(\bar{\YYY}_i, \UUU_i, \ZZZ_i) \le 0,\; \forall l \in \III_{\text{ieq}}$ and $h_{i,l}(\bar{\YYY}_i, \UUU_i, \ZZZ_i) = 0,\; \forall l \in \III_{\text{eq}}$ as inequality and equality constraints where $\III_{\text{ieq}}$ and $\III_{\text{eq}}$ are sets of inequality and equality constraint indices, respectively.
As a result, the GP-DMPC problem for distributed experiment design and coordination of the multi-agent system is formulated as follows, where the current time-step subscript $t$ is omitted for brevity. 
\vspace{-5pt}
\begin{equation}
\label{eq:gpdmpc}
\begin{split}
  &
  \begin{multlined}
  \underset{\{\UUU_{i}, \ZZZ_{i}\}_{i \in \VVV}}{\text{minimize}} \quad \sum_{i=1}^M J_i(\bar{\YYY}_{i}, \UUU_{i}, \ZZZ_{i}) - \gamma \HHH_i(\XXX_i) \\
  + J_{\NNN_i} (\bar{\YYY}_{\NNN_i}, \UUU_{\NNN_i}, \ZZZ_{\NNN_i})
  \end{multlined} \\
  & \text{subject to} \\
  & \quad \mu_{i,l} (\bar{\YYY}_{i}, \XXX_i) = 0 ,\; \forall i \in \VVV, \; \forall l \in \III_{t} \\
  & \quad h_{i,l}( \bar{\YYY}_{i}, \UUU_{i}, \ZZZ_{i} ) = 0 ,\; \forall i \in \VVV, \; \forall l \in \III_{i,eq} \\
  & \quad g_{i,l}( \bar{\YYY}_{i}, \UUU_{i}, \ZZZ_{i} ) \le 0 ,\; \forall i \in \VVV, \; \forall l \in \III_{i,ieq} 
\end{split}
\end{equation} 
where $\gamma$ is a positive constant representing a tradeoff between learning and control objectives and note that $\gamma > 0$ in the experiment design problem while $\gamma = 0$ in the cooperative control problem. 
We use the exact conditional differential entropy of multi-step ahead GP predictions \cite{le2021receding} to represent the individual active learning objective function instead of an upper bound as in \cite{buisson2020actively}, \ie $\HHH_i(\XXX_i) = \operatorname{{log\,det}} \big( \mathbf{\Sigma}_{\GPModel_f} ([\XXX_i]) \big)$ where $\mathbf{\Sigma}_{\GPModel_f}$ is a $H\times H$ posterior covariance matrix of GP joint predictions at the $H$ inputs in $\XXX_i$.
It was shown in \cite{le2021receding} that the optimal experiment using the active learning approach can significantly improve data quality for model learning in comparison with randomized experiments or with using only historical data.
For more details on the active learning technique for dynamical GPs using multi-step ahead prediction approach for a single dynamical system, the readers are referred to \cite{le2021receding,buisson2020actively,capone2020localized}.

\begin{remark}
As discussed in \cite{wang2019global}, any feasible constraint set on the shared variables can be treated as an indicator function and included in the shared objective function $ J_{\NNN_i} (\bar{\YYY}_{\NNN_i}, \UUU_{\NNN_i}, \ZZZ_{\NNN_i})$. 
Therefore, to simplify the problem formulation, we do not include constraints on the shared variables.
In contrast, we present constraints on the local variables as equality and inequality constraints since they are encoded into the local objective function by corresponding penalized functions in our method. 
\end{remark}

We make the following technical assumptions related to the problem \eqref{eq:gpdmpc}.

\begin{assumption}
\label{A-feasibility}
The original problem \eqref{eq:gpdmpc} is feasible.
\end{assumption}

\begin{assumption}
\label{A-convexity}
For all $i \in \VVV$, $J_i$ and $g_{i,l}$ are convex, continuous and Lipschitz differentiable, $h_{i,l}$ are affine, continuous and Lipschitz differentiable in the variables $(\bar{\YYY}_{i}, \UUU_{i}, \ZZZ_{i})$. 
\end{assumption}

\begin{assumption}
\label{A5-admm}
For all $i \in \VVV$, $J_{\NNN_i}$ are convex, continuous and Lipschitz differentiable in the variables $(\bar{\YYY}_{\NNN_i}, \UUU_{\NNN_i}, \ZZZ_{\NNN_i})$. 
\end{assumption}

From Assumptions~\ref{A-convexity} and \ref{A5-admm}, it can be seen that the non-convexity of the problem \eqref{eq:gpdmpc} only results from the GP dynamics and the active learning objectives. 

\begin{assumption}
\label{A3.15-scp}
For all $i \in \VVV$, $\mu_{i,l} (\bar{\YYY}_{i}, \XXX_i), \; \forall l \in \III_{t}$, and $\HHH_i(\XXX_i)$ are continuously Lipschitz differentiable in the variables $(\bar{\YYY}_{\NNN_i}, \XXX_{\NNN_i})$.
\end{assumption}

To overcome the complexity and non-convexity of the original problem \eqref{eq:gpdmpc}, we will convexify the non-convex terms by using their first-order approximations.
However, to avoid the \textit{artificial infeasibility} \cite{mao2018successive} of the problem due to these approximations, the inequality and equality constraints in \eqref{eq:gpdmpc} are encoded into the objective function by the exact penalty functions \cite{mao2018successive} leading to the following penalized optimization problem
\vspace{-5pt}
\begin{equation}
\label{eq:gpdmpc-pen}
\begin{split}
  &
  \begin{multlined}
  \underset{\{\UUU_{i}, \ZZZ_{i}\}_{i \in \VVV}}{\text{minimize}} \quad \sum_{i=1}^M J_i(\bar{\YYY}_{i}, \UUU_{i}, \ZZZ_{i}) - \gamma \HHH_i(\XXX_i) \\
  + \tau_i \Big( \sum_{l \in \III_{t}} \left | \mu_{i,l} (\bar{\YYY}_{i}, \XXX_i) \right | 
  + \sum_{l \in \III_{i,eq}} \left | h_{i,l}( \bar{\YYY}_{i}, \UUU_{i}, \ZZZ_{i} ) \right |  \Big) \\
  + \lambda_i  \sum_{l \in \III_{i,ieq}} \operatorname{max} \big( 0, g_{i,l}( \bar{\YYY}_{i}, \UUU_{i}, \ZZZ_{i}) \big) \\
  + J_{\NNN_i} (\bar{\YYY}_{\NNN_i}, \UUU_{\NNN_i}, \ZZZ_{\NNN_i}) 
  \end{multlined} 
\end{split}
\end{equation} 
where $\tau_i$ and $\lambda_i$, $\forall i \in \VVV$ are large positive penalty weighs.

\begin{assumption}
\label{A-coercivity}
The objective function of the penalized problem \eqref{eq:gpdmpc-pen} is \textit{coercive} \cite{wang2019global}.
\end{assumption}

\section{ADMM with Convexification for GP-DMPC}
\label{sec:algorithm}

In this section, we propose a distributed optimization algorithm called Alternating Direction Method of Multipliers with Convexification (ADMM-C) that is based on the ADMM algorithm \cite{boyd2011distributed} and SCP technique \cite{mao2018successive} for solving the complex and non-convex problem \eqref{eq:gpdmpc}.
The ADMM algorithm was designed for solving convex large-scale optimization problems in a distributed manner \cite{boyd2011distributed}.
For non-convex and non-smooth optimization problems like the problem \eqref{eq:gpdmpc-pen}, the ADMM for non-convex non-smooth optimization \cite{wang2019global} was developed.
However, the algorithm design in \cite{wang2019global} requires all non-convex optimization subproblems to be solved exacly at each iteration, which might restrict its usage in real-time applications.
Moreover, the complexity of the log determinant of the GP covariance matrix in \eqref{eq:gpdmpc-pen} makes the non-convex local subproblems computationally intractable for nonlinear programming solvers.
Therfore, ADMM-C is developed in this section by sequentially convexifying the non-convex local subproblems at each iteration.

To facilitate the algorithm design, we rewrite the penalized GP-DMPC problem \eqref{eq:gpdmpc-pen} in the following simplified form
\begin{equation}
  \label{eq:dop-def}
  \underset{\{\mathbf{x}_i\}_{i \in \VVV}}{\text{minimize}}\; \textstyle\sum_{i=1}^{M} f_i(\mathbf{x}_i) + f_{\NNN_i} (\mathbf{x}_{\NNN_i}) 
\end{equation}
in which $\mathbf{x}_i \in \RR^{n_i}$ is the vector collecting the local variables of agent $i$, and $\mathbf{x}_{\NNN_i} \in \RR^{n_{\NNN_i}}$, where $n_{\NNN_i} = \sum_{j \in \NNN_i} n_j$, are the shared vector concatenating the local variables of all the neighbors of agent $i$, \ie $\mathbf{x}_{\NNN_i} = [\mathbf{x}_{j}^T]_{j \in \NNN_i}^T$.
Moreover, let $F_{ij}$ denote the matrix of transformation between the local variables of agent $i$ and the vector of shared variables of agent $j$ for each $(i, j) \in \EEE$, \ie $\mathbf{x}_i = F_{ij} \mathbf{x}_{\NNN_j}$.
The local objective functions $f_i(\mathbf{x}_i)$ and shared objective functions $f_{\NNN_i} (\mathbf{x}_{\NNN_i})$ are respectively defined by
\begin{equation*}
\begin{multlined}
f_i(\mathbf{x}_i) := J_i(\bar{\YYY}_{i}, \UUU_{i}, \ZZZ_{i}) - \gamma \HHH_i(\XXX_i) \\
+ \tau_i \Big( \sum_{l \in \III_{t}} \left | \mu_{i,l} (\bar{\YYY}_{i}, \XXX_i) \right | 
+ \sum_{l \in \III_{i,eq}} \left | h_{i,l}( \bar{\YYY}_{i}, \UUU_{i}, \ZZZ_{i} ) \right |  \Big) \\
+ \lambda_i  \sum_{l \in \III_{i,ieq}} \operatorname{max} \big( 0, g_{i,l}( \bar{\YYY}_{i}, \UUU_{i}, \ZZZ_{i}) \big)
\end{multlined}
\end{equation*}
and
\begin{equation*}
f_{\NNN_i} (\mathbf{x}_{\NNN_i}) := J_{\NNN_i} (\bar{\YYY}_{\NNN_i}, \UUU_{\NNN_i}, \ZZZ_{\NNN_i})
\end{equation*}

The ADMM-C algorithm solves the problem \eqref{eq:dop-def} in the following consensus form
\begin{equation}
  \label{eq:admm-prob}
  \begin{split}
    \underset{\{\mathbf{x}_i, \mathbf{z}_{\NNN_i}\}_{i \in \VVV}}{\text{minimize}}\; & \textstyle\sum_{i=1}^{M} f_i(\mathbf{x}_i) + f_{\NNN_i}(\mathbf{z}_{\NNN_i}) \\
    \text{subject to}\; & \mathbf{x}_{\NNN_i} = \mathbf{z}_{\NNN_i},\; \forall i \in \VVV \text.
  \end{split}
\end{equation}
with a copy $\mathbf{z}_{\NNN_i}$ of $\mathbf{x}_{\NNN_i}$. The augmented Lagrangian for problem \eqref{eq:admm-prob} is
\begin{equation}
  \label{eq:admm-lagrang1}
  L_\rho( \{ \mathbf{x}_{i}, \mathbf{z}_{\NNN_i}, \mathbf{y}_{\NNN_i} \}_{i \in \VVV} ) 
  = \sum_{i \in \VVV} L_{\rho,i} (  \mathbf{x}_{\NNN_i}, \mathbf{z}_{\NNN_i}, \mathbf{y}_{\NNN_i} )
\end{equation}
where 
\begin{equation*}
\begin{multlined}
L_{\rho,i} (  \mathbf{x}_{\NNN_i}, \mathbf{z}_{\NNN_i}, \mathbf{y}_{\NNN_i} ) 
=  f_i(F_{ii} \mathbf{x}_{\NNN_i}) + f_{\NNN_i}(\mathbf{z}_{\NNN_i}) \\
+ \mathbf{y}_{\NNN_i}^T (\mathbf{x}_{\NNN_i} - \mathbf{z}_{\NNN_i}) 
+ \frac{\rho}{2} \norm{\mathbf{x}_{\NNN_i} - \mathbf{z}_{\NNN_i}}_2^2
\end{multlined}
\end{equation*}
and $\mathbf{y}_{\NNN_i}$, $i \in \VVV$, is the associated dual variables. 
Note that the $\mathbf{x}$, $\mathbf{y}$ and $\mathbf{z}$ notations in this section are different from those in Section~\ref{sec:problem} which were used in the system dynamics.

Since the concensus constraint $\mathbf{x}_{\NNN_i} = \mathbf{z}_{\NNN_i}$, $\forall i \in \VVV$, in \eqref{eq:admm-prob} can be replaced by $\mathbf{x}_i = F_{ij} \mathbf{z}_{N_j}$ and $\mathbf{x}_j = F_{ji} \mathbf{z}_{N_i}$, $\forall (i, j) \in \EEE$, the Lagrangian \eqref{eq:admm-lagrang1} is equivalent to the following Lagrangian 
\begin{equation}
\label{eq:admm-lagrang2}
L_\rho( \{ \mathbf{x}_i, \mathbf{z}_{\NNN_i}, \mathbf{y}_{i} \}_{i \in \VVV} ) 
= \sum_{i \in \VVV} \bar{L}_{\rho,i} (  \mathbf{x}_{i}, \mathbf{z}_{\NNN_i}, \mathbf{y}_{\NNN_i} )
\end{equation}
where 
\begin{equation*}
\begin{multlined}
\bar{L}_{\rho,i} (  \mathbf{x}_{i}, \mathbf{z}_{\NNN_i}, \mathbf{y}_{\NNN_i} ) 
= f_i(\mathbf{x}_i) + f_{\NNN_i}(\mathbf{z}_{\NNN_i}) \\
+ \sum_{j \in \NNN_i} \Big( (F_{ij} \mathbf{y}_{\NNN_j})^T (\mathbf{x}_i - F_{ij} \mathbf{z}_{N_j}) 
+ \frac{\rho}{2} \norm{\mathbf{x}_i - F_{ij} \mathbf{z}_{N_j}}_2^2 \Big) 
\end{multlined}
\end{equation*}

Using the augmented Lagrangians \eqref{eq:admm-lagrang1} and \eqref{eq:admm-lagrang2}, the ADMM algorithm\cite{boyd2011distributed} consists of three following steps:
\begin{subequations}
\label{eq:admm-step}
\begin{align}
\mathbf{x}_{i}^{(k+1)} &= \underset{\mathbf{x}_{i}}{\text{argmin}}\; \bar{L}_{\rho, i} ( \mathbf{x}_i, \mathbf{z}_{\NNN_i}^{(k)}, \mathbf{y}_{\NNN_i}^{(k)} ), \; \forall i \in \VVV  \label{eq:admm-x}\\
\mathbf{z}_{\NNN_i}^{(k+1)} &= \underset{\mathbf{z}_{\NNN_i}}{\text{argmin}}\; L_{\rho, i} ( \mathbf{x}_{\NNN_i}^{(k+1)}, \mathbf{z}_{\NNN_i}, \mathbf{y}_{\NNN_i}^{(k)} ),  \; \forall i \in \VVV \label{eq:admm-z}\\
\mathbf{y}_{\NNN_i}^{(k+1)} &= \mathbf{y}_{\NNN_i}^{(k)} + \rho (\mathbf{x}_{\NNN_i}^{(k+1)} - \mathbf{z}_{\NNN_i}^{(k+1)}), \; \forall i \in \VVV
\end{align}
\end{subequations}

The $x$-minimization step \eqref{eq:admm-x} is equivalent to the following proximal operator
\begin{equation}
\label{eq:admm-x-prox}
\mathbf{x}_{i}^{(k+1)} 
= \textbf{prox}_{\frac{1}{\rho |\NNN_i|} f_{i}} \Bigg(\frac{\sum\limits_{j\in \NNN_i} F_{ij} (\mathbf{z}_{N_j}^{(k)}- \mathbf{y}_{\NNN_i}^{(k)}/\rho)}{|\NNN_i|}  \Bigg)
\end{equation}
while the $z$-minimization step \eqref{eq:admm-z} can be rewritten in the form of the proximal minimization problem
\begin{equation}
\label{eq:admm-z-prox}
\mathbf{z}_{\NNN_i}^{(k+1)} 
= \textbf{prox}_{\frac{1}{\rho} f_{\NNN_i}}(\mathbf{x}_{\NNN_i}^{(k+1)} + \frac{\mathbf{y}_{\NNN_i}^{(k)}}{\rho})
\end{equation}

Note that under Assumptions~\ref{A-convexity} and \ref{A5-admm}, the subproblem \eqref{eq:admm-z-prox} is convex, while the subproblem \eqref{eq:admm-x-prox} is non-convex due to the GP dynamics and the active learning objectives.
In the ADMM-C optimization algorithm, we propose to solve the non-convex subproblem \eqref{eq:admm-x} in each iteration by convexification method.
In particular, instead of solving the non-convex subproblem \eqref{eq:admm-x}, $\mathbf{x}_{i}^{(k+1)}$ is determined by
\begin{equation}
\label{eq:admm-x-scp}
\mathbf{x}_{i}^{(k+1)} = \mathbf{x}_{i}^{(k)} +  \Delta \mathbf{x}_{i}^{(k+1)} 
\end{equation}
where 
\vspace{-5pt}
\begin{equation*}
\label{eq:admm-deltax-scp}
\Delta \mathbf{x}_{i}^{(k+1)}
= \underset{\norm{\Delta \textbf{x}_{i}} \leq r_i}{\textbf{prox}_{\frac{1}{\rho |\NNN_i|} \tilde{f}_{i}}} \Bigg( \frac{\sum\limits_{j\in \NNN_i} F_{ij} \mathbf{z}_{N_j}^{(k)}}{|\NNN_i|} 
- \frac{\mathbf{y}_{i}^{(k)}}{\rho} 
- \mathbf{x}_i^{(k)} \Bigg) 
\end{equation*}
where $\tilde{f}_{i}$ is the approximated function of $f_{i}$ in a small trust region with radius $r_i$ around the nominal solution $\mathbf{x}_{i}^{(k)}$, as shown in \eqref{eq:admm-tilde-f} on the next page,
where $\tilde{\HHH}_i(\Delta \XXX_i)$ and $\tilde{\mu}_{i,l} (\Delta\YYY_i, \Delta\XXX_i)$ are first-order approximations of $\HHH_i(\XXX_i)$ and $\mu_{i,l} (\bar{\YYY}_{i}, \XXX_i)$ around nominal values $\XXX_{i}^{\star}$ and $\YYY_{i}^{\star}$.
The details on the first-order approximations of the log determinant of GP covariance matrix and the GP predicted mean can be found in \cite{le2021receding}.
\begin{figure*}[t!]
\begin{equation}
\label{eq:admm-tilde-f}
\begin{multlined}
\tilde{f}_{i} (\Delta \mathbf{x}_{i}) := J_i(\YYY_{i}^{\star} + \Delta\YYY_i, \UUU_{i}^{\star} + \Delta\UUU_{i}, \ZZZ_{i}^{\star} + \Delta\ZZZ_i) - \gamma \tilde{H}_i(\Delta \XXX_i)
+ \tau_i \Big( \sum_{l \in \III_{t}} \left | \tilde{\mu}_{i,l} (\Delta\YYY_i, \Delta\XXX_i) \right | \\
+ \sum_{l \in \III_{i,eq}} \left | h_{i,l}(\YYY_{i}^{\star} + \Delta\YYY_i, \UUU_{i}^{\star} + \Delta\UUU_{i}, \ZZZ_{i}^{\star} + \Delta\ZZZ_i) \right |  \Big)
+ \lambda_i  \sum_{l \in \III_{i,ieq}} \operatorname{max} \big( 0, g_{i,l}(\YYY_{i}^{\star} + \Delta\YYY_i, \UUU_{i}^{\star} + \Delta\UUU_{i}, \ZZZ_{i}^{\star} + \Delta\ZZZ_i) \big)
\end{multlined} 
\end{equation}
\hrule
\vspace{-18pt}
\end{figure*}

Under the Assumption~\ref{A-convexity}, the problem \eqref{eq:admm-x-scp} is convex and thus can be solved by a convex solver.
The obtained solution is then decided to be accepted or rejected, and the trust-region radius is adapted as elaborated in \cite{nghiem2019linearized}.
Therefore, we obtain the ADMM-C algorithm as presented in Algorithm~\ref{alg:ADMM-C}.

\begin{remark}
\label{rmk-rules}
Each agent computes the actual local cost reduction $\delta_i^{(k+1)} = f_i(\mathbf{x}_i^{(k)}) - f_i(\mathbf{x}_i^{(k+1)})$ and the predicted local cost reduction, \ie $\tilde{\delta}_i^{(k+1)} = f_i(\mathbf{x}_i^{(k)}) - \tilde{f}_i(\mathbf{\Delta x}_i^{(k+1)})$, and compare the ratio $\delta_i^{(k+1)}/\tilde{\delta}_i^{(k+1)}$ with some predefined thresholds $0 < \epsilon_0 < \epsilon_1 < \epsilon_2 < 1$ to adjust the trust region $r_i$ by $\beta_{\mathrm{fail}} < 1$ and $\beta_{\mathrm{succ}} > 1$ according to the adjustment rule in \cite{nghiem2019linearized}.    
\end{remark}

\begin{remark}
Since there is no central coordinator to supervise the practical convergence of the problem, the algorithm is terminated when a predefined number of iterations is reached \cite{van2017distributed}.
\end{remark}

The convergence analysis of the proposed ADMM-C algorithm will be given in the Appendix~\ref{sec:convergence} .

\begin{algorithm}[t]
  \caption{ADMM-C algorithm}
  \label{alg:ADMM-C}
  \begin{algorithmic}[1]
    \Require $\mathbf{y}_{\NNN_i}^{(0)}$, $\mathbf{z}_{\NNN_i}^{(0)}$, $\rho > 0$, $k_{\mathrm{max}} > 0$.
    \For {$k = 1, \dots, k_{\mathrm{max}}$}
    \State Agent $i$ sends $F_{ij} (\mathbf{z}_{\NNN_i}^{(k)} - \mathbf{y}_{\NNN_i}^{(k)}/\rho)$ to its neighbors $j \in \NNN_i$.
    \State Agent $i$ computes $\mathbf{x}_i^{(k+1)}$ by \eqref{eq:admm-x-scp}. %
    \State Agent $i$ accepts or rejects the obtained solution of \eqref{eq:admm-x-scp}, and updates trust-region radius by the Remark~\ref{rmk-rules}.
    \State Agent $i$ sends $\mathbf{x}_i^{(k+1)}$ to its neighbors $j \in \NNN_i$.
    \State Agent $i$ receives all $\mathbf{x}_j^{(k+1)}$ from its neighbors $j \in \NNN_i$ and forms $\mathbf{x}_{\NNN_i}^{(k+1)}$.
    \State Agent $i$ computes $\mathbf{z}_{\NNN_i}^{(k+1)}$ by \eqref{eq:admm-z-prox}
    \State Agent $i$ updates $\mathbf{y}_{\NNN_i}^{(k+1)} = \mathbf{y}_{\NNN_i}^{(k)} + \rho(\mathbf{x}_{\NNN_i}^{(k+1)} - \mathbf{z}_{\NNN_i}^{(k+1)})$
    \EndFor
     \State \textbf{return $\mathbf{x}_i^{(k_{\mathrm{max}})}$ and $\mathbf{x}_{\NNN_i}^{(k_{\mathrm{max}})}$} 
  \end{algorithmic}
\end{algorithm} \setlength{\textfloatsep}{0.05cm}

\section{Simulation}
\label{sec:simulation}

In this section, we utilize the multi-agent formation control example presented in \cite{raffard2004distributed} as an illustrative example for the distributed experiment design and control problem and the ADMM-C algorithm.

\subsection{Multi-vehicle formation control example}
\label{sec:simulation:example}

We consider a group of $M = 2N+1$ vehicles where the communication between the agents is specified by the edge set $\EEE = \{(i,j)\; | \; |i-j| = 1 \}$, \ie two agents are neighbors if and only if their indices are consecutive.
The dynamics of each vehicle $i$ is described by the following continuous-time kinematic bicycle model \cite{kong_kinematic_2015}
\begin{equation}
\begin{alignedat}{2}
  \dot{x}_i &= v_i \cos(\theta_i + \beta_i), \quad
  && \dot{y}_i = v_i \sin(\theta_i + \beta_i), \\
  \dot{\theta}_i &= \frac{v_i}{l_{r,i}} \sin (\beta_i), \quad
  && \dot{v}_i = a_i .
\end{alignedat}
\vspace{-10pt}
\end{equation}
where $\beta_i = \tan^{-1} \left( \frac{l_{r,i}}{l_{f,i} + l_{r,i}} \tan (\alpha_i)\right)$ is the angle of the current velocity of the center of mass with respect to the longitudinal axis of the car, $(x_i, y_i)$ is the position vector of the vehicle on a two-dimensional plane, $\theta_i$ is the heading angle, $v_i$ is the speed of the vehicle, and the two control inputs $a_i$ and $\alpha_i$ are respectively the linear acceleration and steering angle of the vehicle.
The vehicle's dynamics are discretized with a sampling time $\Delta T > 0$, leading to the following discrete-time form
\begin{equation}
\begin{alignedat}{2}
  \label{eq:ex-disc-dyn}
  {x}_{i,k+1} &= x_{i,k} + \Delta x_{i,k}, \quad
  && {y}_{i,k+1} = y_{i,k} + \Delta y_{i,k}, \\
  {\theta}_{i,k+1} &= \theta_{i,k} + \Delta \theta_{i,k}, \quad
  && v_{i,k+1} = v_{i,k} + \Delta T a_{i,k}. \\
\end{alignedat}
\end{equation}
in which the one-step changes $\Delta x_{i,k}$, $\Delta y_{i,k}$ and $\Delta \theta_{i,k}$ are nonlinear in other variables.
In this example, these nonlinear components are learned by three GP models, 
$\Delta x_{i,k} \sim \GPModel_{i, \Delta x} ( \mathbf{x}_{p,i,k} )$, $\Delta y_{i,k} \sim  \GPModel_{i, \Delta y} ( \mathbf{x}_{p,i,k} )$, and $\Delta \theta_{i,k} \sim  \GPModel_{i, \Delta \theta} ( \mathbf{x}_{a,i,k} )$ with the vectors of GP inputs $\mathbf{x}_{p,i,k} = [\cos \theta_{i,k}, \sin \theta_{i,k}, v_{i,k}, \alpha_{i,k}]^T$ and $\mathbf{x}_{a,i,k} = [v_{i,k}, \alpha_{i,k}]^T$.
Note that the GP input vectors written in bold are different from the vehicle's position $x_{i,k}$. %
The GP models result in the following GP dynamical equations
\begin{equation}
  \begin{split}
    \label{eq:ex-gp-dyn}
    \Delta \bar{x}_{i,k} &= \mu_{i,\Delta x} ( \textbf{x}_{p,i,k} ), \quad
    \Delta \bar{y}_{i,k} = \mu_{i, \Delta y} ( \textbf{x}_{p,i,k} ), \\
    \Delta \bar{\theta}_{i,k} &= \mu_{i, \Delta \theta} ( \textbf{x}_{a,i,k} ) \text.
  \end{split}
\end{equation}

The GP-DMPC formulation of this example is given by
\vspace{-5pt}
\begin{subequations}
\label{eq:ex-gpmpc}
\begin{align}
  &
  \underset{\{a_{i,k}, \alpha_{i,k}\}}{\text{minimize}} \sum_{i=1}^{M} J_i - \gamma \big( \HHH_{i,x} + \HHH_{i,y} + \HHH_{i,\theta} \big) + J_{\NNN_i} \label{eq:ex-gpmpc:obj}\\
  & \text{subject to} \nonumber \\
  & \qquad \text{\eqref{eq:ex-disc-dyn} and \eqref{eq:ex-gp-dyn}} \\
  & \qquad v_{\text{min}} \leq v_{i,k} \leq v_{\text{max}}, \label{eq:ex-gpmpc:velo-bound-cons}\\   
  & \qquad a_{\text{min}} \leq a_{i,k} \leq a_{\text{max}}, \quad \alpha_{\text{min}} \leq \alpha_{i,k} \leq \alpha_{\text{max}} \label{eq:ex-gpmpc:bound-cons} \\
  & \qquad x_{\text{min}} \leq x_{i,k} \leq x_{\text{max}}, \quad y_{\text{min}} \leq y_{i,k} \leq y_{\text{max}} \label{eq:ex-gpmpc:pos-bound}
\end{align}
\end{subequations} 
where the constraints hold for all $i = 1,\dots,M$ and $k \in \III_{t}$, \eqref{eq:ex-gpmpc:velo-bound-cons} are velocity bound constraints, \eqref{eq:ex-gpmpc:bound-cons} are bound constraints on the control inputs, \eqref{eq:ex-gpmpc:pos-bound} are safety bound constraints on the vehicle's positions to ensure that the cars move within the experimental space.
The local control objective $J_i$ is given by
\begin{equation*}
  J_i = \sum_{k = t}^{t+H-1} \left\|
  \begin{bmatrix}
    a_{i,k} \\ \alpha_{i,k}
  \end{bmatrix}
  \right\|_{\mathbf{R}_i}^2 + 
  \left\|
  \begin{bmatrix}
    x_{i,k+1} \\ y_{i,k+1}
  \end{bmatrix} - \mathbf{r}_{i,k+1} \right\|_{\mathbf{Q}_i}^2 
\end{equation*}
where $\mathbf{r}_{k+1}$ denotes the reference at time step $k+1$.
Note that $\mathbf{Q}_i \neq \mathbf{0}$ for the lead vehicle and $\mathbf{Q}_i = \mathbf{0}$ for the other vehicles, \ie only the lead vehicle is required to track a reference.
Note that given a vector $\nu$ and a positive semidefinite matrix $\mathbf{M}$, we define $\norm{\nu}_{\mathbf{M}}^2 = \nu^T \mathbf{M} \nu$.  
The active learning goals for the GP models $\HHH_{i,x}$, $\HHH_{i,y}$, and $\HHH_{i,\theta}$ are given by
\begin{equation*}
\begin{split}
\HHH_{i,x} &= \operatorname{{log\,det}} \big(\mathbf{\Sigma}_{i,\GPModel_{\Delta x}} (\mathbf{x}_{p,i,t+1:t+H}) \big), \\
\HHH_{i,y} &= \operatorname{{log\,det}} \big(\mathbf{\Sigma}_{i,\GPModel_{\Delta y}} (\mathbf{x}_{p,i,t+1:t+H}) \big), \\
\HHH_{i,\theta} &= \operatorname{{log\,det}} \big(\mathbf{\Sigma}_{i,\GPModel_{\Delta \theta}} (\mathbf{x}_{a,i,t+1:t+H}) \big) . 
\end{split}
\end{equation*}
where $\mathbf{x}_{p,i,t+1:t+H}$ and $\mathbf{x}_{a,i,t+1:t+H}$ denote the concatenated vector of GP inputs from time $t+1$ to $t+H$.
The shared objective function $J_{\NNN_i}$ encodes the formation goal as
\begin{equation*}
  J_{\NNN_i} = \hspace{-5pt} \sum_{j \in \NNN_i, j \neq i} \sum_{k = t}^{t+H-1} \left\|
  \begin{bmatrix}
    x_{i,k+1} \\ y_{i,k+1}
  \end{bmatrix} -
  \begin{bmatrix}
    x_{j,k+1} \\ y_{j,k+1}
  \end{bmatrix} -
  \Delta_{i,j}  \right\|_{\mathbf{P}_{i,j}}^2
\end{equation*}
in which $\Delta_{i,j}$ is the predefined distance between vehicles $i$ and $j$ in the formation.

We assume that the agents have different unknown system parameters, therefore their GP dynamical models are different and must be learned separately.
To save time and effort spent on experiments for training data collection, we aim to conduct one experiment where all agents collect online data for model learning in a simultaneous manner.
Consequently, in the experiment design problem, 
the agents are required to perform the active learning while ensuring a predefined formation for collision avoidance and connectivity maintenance. 
We assume that three initial GP models with 100 data points for each are available, for example, the models learned from historical data of one particular vehicle. 
These models are used as the universal starting models for all agents, then the distributed experiment design method is applied in 100 time steps to collect new data points and retrain the individual GP models for each agent, while the older data points are sequentially discarded. 
Meanwhile, in the control problem, the active learning objectives are disabled, the vehicles collaborate to perform a control task where the lead vehicle tracks a reference while the entire network form and remain a formation.

The sampling time $\Delta T$ was chosen to be \SI{200}{ms} while the control horizon length was 5.
The system parameters of the vehicles were chosen by random perturbation up to $20\%$ of the following nominal values: 
$l_r = \SI{0.386}{m}$,
$l_f = \SI{0.205}{m}$.
The constant parameters in the control problem \eqref{eq:ex-gpmpc} were, for all $i$:
$v_{\mathrm{min}} = \SI{0}{m/s}$,
$v_{\mathrm{max}} = \SI{2}{m/s}$,
$a_{\mathrm{min/max}} = \SI{\pm 2}{m/s^2}$,
$\alpha_{\mathrm{min/max}} = \SI{\pm \pi/4}{rad}$,
$x_{\mathrm{min/max}} = y_{\mathrm{min/max}} = \SI{\pm 10}{m}$,
$P_{i,i+1} = \operatorname{diag}([10,10])$,
$Q_l = \operatorname{diag}([10^2,10^2])$,
$R_i = \operatorname{diag}([0.1, 0.1])$,
$\gamma = 10$.
The parameters of the ADMM-C algorithm were:
$k_{\mathrm{max}} = 10$,
$\rho = 10^2$,
$r_i^{(0)} = 0.1$,
$\beta_{\mathrm{fail}} = 0.5$,
$\beta_{\mathrm{succ}} = 2.0$,
$\epsilon_{0} = 0.2$,
$\epsilon_{1} = 0.4$,
$\epsilon_{2} = 0.8$.

\subsection{Results and Discussions}

We conducted three simulations for the networks of 5, 9, and 15 vehicles.
The trajectories of all the vehicles in the 5-vehicles simulation case are given in Figures~\ref{fig:oed} and ~\ref{fig:control}, for the experiment and the coordination, respectively.
In both scenarios, the vehicle network is required to maintain a predefined formation, while the lead vehicle (in the middle) additionally track a figure-eight reference trajectory in the coordination task.
At the beginning of the experiment, the formation is not formed well since the GP models are not sufficiently accurate.
However, as the active learning objectives drive the agents to the states associated with new informative data, the precision of the learned GP models is gradually improved, thus the agents can maintain the formation better.
Using the models obtained from the experiment, the network of vehicles is able to perform tracking and formation control in the coordination simulation as shown in Figure~\ref{fig:control}.
The tracking errors in $x$ and $y$ positions of the lead vehicle in the tracking control task are shown in Figure~\ref{fig:tracking-err}. 
Though the lead vehicle does not perfectly track the reference, the tracking errors are kept small within $\SI{0.15}{m}$ during the steady state.
The simulation results for 9-vehicles and 15-vehicles simulations are available at the video \url{https://youtu.be/U9bunkfFqnE}. 

\begin{figure}[!tb]
    \vspace{10pt}
    \centering
    \scalebox{0.75}{\input{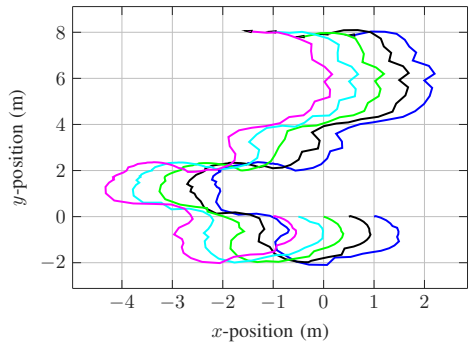}}
    \caption{Trajectories of the vehicles in the experiment.}
    \label{fig:oed}
    \vspace{-5pt}
\end{figure}

\begin{figure}[!tb]
    \centering
    \scalebox{0.75}{\input{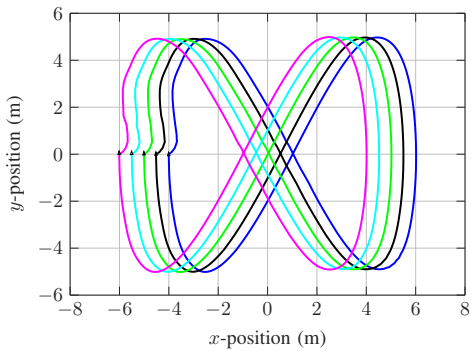}}
    \caption{Trajectories of the vehicles in the coordination.}
    \label{fig:control}
\end{figure}

Figure~\ref{fig:timing} shows the statistical boxplots of the solving time per time step of the ADMM-C algorithm in the experiment and the coordination, in three simulation cases with 5, 9 and 15 vehicles, respectively.
All simulations in this work are performed on a DELL computer with a 3.0 GHz Intel Core i5 CPU and 8 Gb RAM, while the Julia programming language is used for the implementation.
Overall, as the number of vehicles increases, the algorithm takes increasingly longer time to solve the problem.
Additionally, it can be seen that the computation time required for solving the experiment design problem which involves the log determinant of the GP covariance matrix is not much higher than that for solving the coordination problem.
Note that the computation time also scales proportionally with the predefined number of iterations in the algorithm which is chosen appropriately to balance the control performance and the computational practicality. 

\begin{figure}[!tb]
    \vspace{7pt}
    \centering
    \scalebox{0.75}{\input{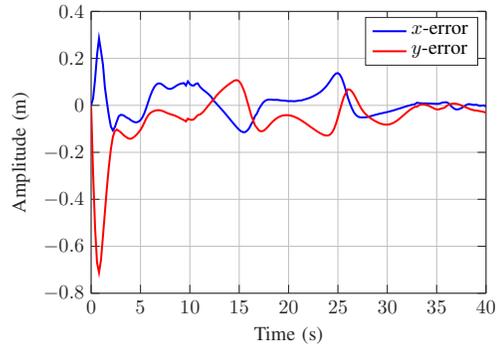}}
    \caption{Tracking errors in $x$ and $y$ positions of the lead vehicle.}
    \label{fig:tracking-err}
\end{figure}

\begin{figure}[!tb]
    \centering
    \scalebox{0.75}{\input{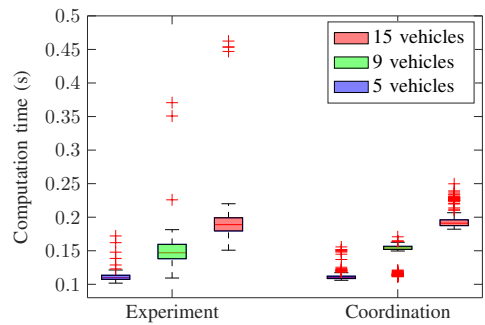}}
    \caption{Boxplots of computation time of the ADMM-C algorithm in the experiment and the coordination.} \label{fig:timing}
\end{figure}

\section{Conclusion}
\label{sec:conclusion}

We presented a Gaussian Process-based Distributed Model Predictive Control problem for multi-agent systems that covers the experiment design and coordination problems. 
The ADMM with Convexification (ADMM-C) optimization algorithm was developed to solve the resulting non-convex and complex problem in a distributed manner, in which the first-order approximations of the active learning objectives and the mean of GP dynamics were utilized to convexify the local subproblem at each itearation.
Under some technical assumptions, we proved that the proposed algorithm converges to a stationary point of the penalized optimization problem.
The performance of our problem formulation and distributed optimization method was validated by a numerical simulation of a multi-vehicle formation control system.
Our future work aims to improve the performance and scalability of our approach and then apply it in real-world systems.

\bibliographystyle{IEEEtran}
\bibliography{IEEEabrv,references}

\appendices
\section{Convergence properties of ADMM-C algorithm} 
\label{sec:convergence}

The convergence analysis of the ADMM-C algorithm is given in this section.
To prove that the ADMM-C converges to a stationary point of the problem \eqref{eq:admm-prob}, we will show that the following four key properties \cite{wang2019global} are satisfied:

\begin{itemize}
\item P1 (Boundedness) $\{ \mathbf{x}_i^{(k)}, \mathbf{z}_{\NNN_i}^{(k)}, \mathbf{y}_{\NNN_i}^{(k)}\}_{i \in \VVV}$ is bounded, and $L_\rho ( \{ \mathbf{x}_{i}^{(k)}, \mathbf{z}_{\NNN_i}^{(k)}, \mathbf{y}_{\NNN_i}^{(k)} \}_{i \in \VVV} )$ is lower bounded.

\item P2 (Sufficient descent) There is a constant $c_1 > 0$ such that for all $k \in \NN$ and $i \in \VVV$ we have
\begin{equation}
\begin{multlined}
\hspace{-5pt} L_\rho \big( \{ \mathbf{x}_{i}^{(k)}, \mathbf{z}_{\NNN_i}^{(k)}, \mathbf{y}_{\NNN_i}^{(k)} \}_{i \in \VVV} \big) \\
- L_\rho \big( \{ \mathbf{x}_{i}^{(k+1)}, \mathbf{z}_{\NNN_i}^{(k+1)}, \mathbf{y}_{\NNN_i}^{(k+1)} \}_{i \in \VVV} \big) \\
\hspace{-15pt} \ge \hspace{-1pt} c_1 \hspace{-1pt} \sum_{i \in \VVV} \hspace{-2pt} \Big( \norm{\mathbf{x}_i^{(k+1)} - \mathbf{x}_i^{(k)}}^2 \hspace{-2pt}+\hspace{-2pt} \norm{\mathbf{z}_{\NNN_i}^{(k+1)} - \mathbf{z}_{\NNN_i}^{(k)}}^2 \Big)
\end{multlined}
\end{equation}

\item P3 (Subgradient bound) There is a constant $c_2 > 0$ and $d^{(k+1)} \in \partial L_\rho ( \{ \mathbf{x}_{i}^{(k+1)}, \mathbf{z}_{\NNN_i}^{(k+1)}, \mathbf{y}_{\NNN_i}^{(k+1)} \}_{i \in \VVV} )$ such that
\begin{equation}
\begin{multlined}
\norm{d^{(k+1)}} \le c_2 \sum_{i \in \VVV} \Big( \norm{\mathbf{x}_i^{(k+1)} - \mathbf{x}_i^{(k)}} \\
+ \norm{\mathbf{z}_{\NNN_i}^{(k+1)} - \mathbf{z}_{\NNN_i}^{(k)}} \Big)
\end{multlined}
\end{equation}

\item P4 (Limiting continuity) If $\{ \mathbf{x}_i^{*}, \mathbf{z}_{\NNN_i}^{*}, \mathbf{y}_{\NNN_i}^{*}\}_{i \in \VVV}$ is the limit point of a sub-sequence $\{ \mathbf{x}_i^{(k_s)}, \mathbf{z}_{\NNN_i}^{(k_s)}, \mathbf{y}_{\NNN_i}^{(k_s)}\}_{i \in \VVV}$, then 
\[
\begin{multlined}
L_\rho \big( \{ \mathbf{x}_i^{*}, \mathbf{z}_{\NNN_i}^{*}, \mathbf{y}_{\NNN_i}^{*}\}_{i \in \VVV} \big) \\
= \lim_{s \to \infty} L_\rho \big( \{ \mathbf{x}_i^{(k_s)}, \mathbf{z}_{\NNN_i}^{(k_s)}, \mathbf{y}_{\NNN_i}^{(k_s)}\}_{i \in \VVV} \big)
\end{multlined}
\]

\end{itemize}

The above properties and their proofs will be given in Lemmas~\ref{L6-admm}, \ref{L9-admm}, \ref{L10-admm}, and \ref{LP4-admm}, respectively. 

\begin{remark}
Given the Assumption~\ref{A5-admm}, for all $i \in \VVV$, $f_{\NNN_i}$ are Lipschitz differentiable and assume that $\bar{L}$ is the universal Lipschitz constant for all $\nabla f_{\NNN_i}$.
\end{remark}

\begin{remark}
For any fixed $\mathbf{u}$ with appropriate dimension, $\underset{\mathbf{z}_{\NNN_i}} {\operatorname{minimize}}\; \{ f_{\NNN_i} (\mathbf{z}_{\NNN_i}) : \mathbf{z}_{\NNN_i} = \mathbf{u} \}$ always has a unique minimizer and $H_i(\mathbf{u}) \triangleq \underset{\mathbf{z}_{\NNN_i}} {\operatorname{argmin}}\; \{ f_{\NNN_i} (\mathbf{z}_{\NNN_i}) : \mathbf{z}_{\NNN_i} = \mathbf{u} \} = \mathbf{u}$ is a Lipschitz continuous map with Lipschitz constant $\bar{M} = 1$. 
Therefore, the Assumption A3(a) in \cite{wang2019global} is satisfied for our problem \eqref{eq:admm-prob}.
\end{remark}

\begin{remark}
Given the Assumption~\ref{A-coercivity}, the Assumption A1 in \cite{wang2019global} on coercivity is satisfied for our problem \eqref{eq:admm-prob}. 
\end{remark}

\begin{remark}
Given the Assumption~\ref{A-feasibility}, the Assumption A2 in \cite{wang2019global} on feasibility is satisfied for our problem \eqref{eq:admm-prob}.
\end{remark}

\begin{lemma}
\label{L3.17-scp}
For any iteration $k \in \NN$, there exists a constant $a > 0$ such that the accepted solution of the problem \eqref{eq:admm-x} by the SCP algorithm satisfy
\begin{equation}
\label{eq:sufficient-descent-scp}
\begin{multlined}
\bar{L}_{\rho, i} ( \mathbf{x}_i^{(k)}, \mathbf{z}_{\NNN_i}^{(k)}, \mathbf{y}_{\NNN_i}^{(k)} ) - \bar{L}_{\rho, i} ( \mathbf{x}_i^{(k+1)}, \mathbf{z}_{\NNN_i}^{(k)}, \mathbf{y}_{\NNN_i}^{(k)} ) \\
\ge a \norm{\mathbf{x}_i^{(k+1)} - \mathbf{x}_i^{(k)}}^2, \forall i \in \VVV
\end{multlined}
\end{equation}
\end{lemma}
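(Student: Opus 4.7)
The plan is to exploit the strong convexity of the convexified augmented Lagrangian $\tilde{L}_{\rho,i}$ together with a Lipschitz-based quadratic error bound between $f_i$ and its surrogate $\tilde{f}_i$. These two ingredients combine to give the required sufficient-descent inequality whenever $\rho$ is chosen large enough to dominate the aggregate Lipschitz constant of the non-convex parts of $f_i$.

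First, regard $\bar{L}_{\rho,i}(\cdot,\mathbf{z}_{\NNN_i}^{(k)},\mathbf{y}_{\NNN_i}^{(k)})$ as a function of $\mathbf{x}_i$ only. The $\mathbf{x}_i$-dependent pieces are $f_i(\mathbf{x}_i)$, the linear dual terms (zero curvature), and the $|\NNN_i|$ quadratic penalties $\tfrac{\rho}{2}\|\mathbf{x}_i-F_{ij}\mathbf{z}_{\NNN_j}^{(k)}\|^2$ whose Hessians sum to $\rho|\NNN_i|\IdentityMatrix$. Replacing $f_i$ by $\tilde{f}_i$ from \eqref{eq:admm-tilde-f} yields a function $\tilde{L}_{\rho,i}$ that is strongly convex in $\mathbf{x}_i$ with modulus $\rho|\NNN_i|$: the retained pieces ($J_i$, $|h_{i,l}|$ with $h_{i,l}$ affine, and $\max(0,g_{i,l})$ with $g_{i,l}$ convex) are convex by Assumption~\ref{A-convexity}, and the linearised pieces $\tilde{H}_i$ and $|\tilde{\mu}_{i,l}|$ are convex by construction. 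Since $\mathbf{x}_i^{(k+1)}$ minimises $\tilde{L}_{\rho,i}$ over the trust-region ball of radius $r_i$ around $\mathbf{x}_i^{(k)}$, and $\mathbf{x}_i^{(k)}$ is itself feasible for this ball, strong convexity yields
\[\tilde{L}_{\rho,i}(\mathbf{x}_i^{(k)}) - \tilde{L}_{\rho,i}(\mathbf{x}_i^{(k+1)}) \ge \tfrac{\rho|\NNN_i|}{2}\|\mathbf{x}_i^{(k+1)}-\mathbf{x}_i^{(k)}\|^2.\]

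Next, because $\bar{L}_{\rho,i}$ and $\tilde{L}_{\rho,i}$ share the same quadratic and linear parts, their pointwise gap reduces to $f_i-\tilde{f}_i=-\gamma(\HHH_i-\tilde{H}_i)+\tau_i\sum_{l\in\III_t}(|\mu_{i,l}|-|\tilde{\mu}_{i,l}|)$. Assumption~\ref{A3.15-scp} supplies Lipschitz constants for $\nabla\HHH_i$ and each $\nabla\mu_{i,l}$; the descent lemma applied to $\HHH_i$ and each $\mu_{i,l}$, followed by the reverse triangle inequality $\bigl||\mu_{i,l}|-|\tilde{\mu}_{i,l}|\bigr|\le|\mu_{i,l}-\tilde{\mu}_{i,l}|$ to cope with the absolute values, yields a uniform constant $L_f$ with $|f_i(\mathbf{x}_i^{(k+1)})-\tilde{f}_i(\Delta\mathbf{x}_i^{(k+1)})|\le\tfrac{L_f}{2}\|\Delta\mathbf{x}_i^{(k+1)}\|^2$. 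Because $\tilde{f}_i(0)=f_i(\mathbf{x}_i^{(k)})$ makes the base-point evaluation exact, subtracting the two inequalities gives
\[\bar{L}_{\rho,i}(\mathbf{x}_i^{(k)},\cdot)-\bar{L}_{\rho,i}(\mathbf{x}_i^{(k+1)},\cdot) \ge \tfrac{\rho|\NNN_i|-L_f}{2}\|\Delta\mathbf{x}_i^{(k+1)}\|^2.\]
Thus $a:=(\rho|\NNN_i|-L_f)/2$ suffices provided $\rho$ is chosen large enough that $\rho|\NNN_i|>L_f$; the acceptance test of Remark~\ref{rmk-rules} enforces precisely this regime, since whenever the ratio $\delta_i^{(k+1)}/\tilde{\delta}_i^{(k+1)}$ dips below $\epsilon_0$ the trust region is contracted and the subproblem re-solved, which only sharpens the quadratic bound.

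The main obstacle is the quadratic error estimate on the non-smooth composite terms in $f_i$ and $\tilde{f}_i$: the $|\cdot|$ wrappers around the Lipschitz-differentiable $\mu_{i,l}$ break a direct Taylor expansion and must be routed around via the reverse triangle inequality term by term. A secondary issue is ensuring that the aggregate Lipschitz constant $L_f$ can be taken uniform in $k$; this follows from Assumption~\ref{A3.15-scp} together with the coercivity in Assumption~\ref{A-coercivity}, which confines the iterates $\{\mathbf{x}_i^{(k)}\}$ to a bounded set on which all relevant Lipschitz constants are finite.
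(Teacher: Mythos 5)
The paper itself offers no written proof of this lemma; it defers entirely to Condition~3.17 of \cite{mao2018successive}. Your reconstruction captures the right first ingredient: because $\tilde{f}_i$ is convex (the retained terms by Assumption~\ref{A-convexity}, the linearized terms by construction) and the quadratic penalties contribute curvature $\rho|\NNN_i|\IdentityMatrix$, the convexified subproblem objective is $\rho|\NNN_i|$-strongly convex, so the minimizer over the trust-region ball (which contains $\Delta\mathbf{x}_i=0$) yields a \emph{predicted} decrease of at least $\tfrac{\rho|\NNN_i|}{2}\|\mathbf{x}_i^{(k+1)}-\mathbf{x}_i^{(k)}\|^2$. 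This is exactly the quadratic lower bound that the SCP argument needs.

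The gap is in how you transfer the predicted decrease to the actual decrease. You bound the gap $|f_i(\mathbf{x}_i^{(k+1)})-\tilde{f}_i(\Delta\mathbf{x}_i^{(k+1)})|\le\tfrac{L_f}{2}\|\Delta\mathbf{x}_i^{(k+1)}\|^2$ and conclude with $a=(\rho|\NNN_i|-L_f)/2$, which is positive only if $\rho|\NNN_i|>L_f$ --- a condition that appears nowhere in the lemma, in the assumptions, or in Theorem~\ref{theo:convergence} (whose only condition on $\rho$ involves $\bar{L}$, the Lipschitz constant of the \emph{shared} objectives, not of $\HHH_i$ or $\mu_{i,l}$, whose gradients carry the potentially large weights $\gamma$ and $\tau_i$). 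Your attempt to rescue this by appealing to Remark~\ref{rmk-rules} does not work: both the predicted decrease and the linearization error scale as $\|\Delta\mathbf{x}_i\|^2$, so contracting the trust-region radius $r_i$ leaves their ratio unchanged and cannot "sharpen the quadratic bound." The intended route in \cite{mao2018successive} avoids this entirely: the acceptance test itself guarantees that an \emph{accepted} step achieves an actual reduction of at least $\epsilon_0$ times the predicted reduction, so combining the ratio test with the strong-convexity bound gives \eqref{eq:sufficient-descent-scp} directly with $a=\epsilon_0\rho|\NNN_i|/2$, with no condition relating $\rho$ to $L_f$. (One should then note that the ratio test in Remark~\ref{rmk-rules} is stated for $f_i$ rather than for $\bar{L}_{\rho,i}$, so a short bookkeeping step is still needed to account for the identical linear and quadratic terms on both sides; but the essential mechanism is the acceptance ratio, not a Lipschitz comparison of $f_i$ with $\tilde f_i$.)
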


The proof follows directly from the proof of Condition 3.17 in \cite{mao2018successive}.

\begin{lemma}
\label{L3.18-scp}
Given Assumptions~\ref{A-convexity} and \ref{A3.15-scp}, for any iteration $k \in \NN$, there exists a constant $b > 0$ and $d_i^{k+1} \in \frac{\partial \bar{L}_{\rho,i}}{\partial \mathbf{x}_i} \big( \mathbf{x}_{i}^{(k+1)}, \mathbf{z}_{\NNN_i}^{(k+1)}, \mathbf{y}_{\NNN_i}^{(k+1)} \big)$ for all $i \in \VVV$ such that
\begin{equation}
\norm{d_i^{k+1}} \le b \norm{\mathbf{x}_i^{(k+1)} - \mathbf{x}_i^{(k)}}, \forall i \in \VVV
\end{equation}
\end{lemma}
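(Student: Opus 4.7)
The plan is to construct an explicit element $d_i^{(k+1)}$ of $\frac{\partial \bar{L}_{\rho,i}}{\partial \mathbf{x}_i}$ evaluated at $(\mathbf{x}_i^{(k+1)}, \mathbf{z}_{\NNN_i}^{(k+1)}, \mathbf{y}_{\NNN_i}^{(k+1)})$ and bound it directly in terms of $\mathbf{x}_i^{(k+1)} - \mathbf{x}_i^{(k)}$, using the optimality conditions of the SCP subproblem and the Lipschitz differentiability hypotheses.

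First I would write the first-order optimality condition for the accepted step $\Delta \mathbf{x}_i^{(k+1)} = \mathbf{x}_i^{(k+1)} - \mathbf{x}_i^{(k)}$ of the convex trust-region proximal problem \eqref{eq:admm-x-scp}. Because the acceptance rule in Remark~\ref{rmk-rules} rejects any step whose convexified model does not decrease sufficiently, the accepted iterate satisfies the KKT condition of the proximal convex program. Modulo a nonnegative trust-region multiplier $\eta_i^{(k+1)} \ge 0$ for the active constraint $\norm{\Delta\mathbf{x}_i} \le r_i$, this condition reads
\begin{equation*}
0 \in \partial \tilde{f}_i\bigl(\Delta\mathbf{x}_i^{(k+1)}\bigr) + \rho\,|\NNN_i|\,\Delta\mathbf{x}_i^{(k+1)} - \rho \!\!\sum_{j\in\NNN_i}\!\! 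F_{ij}\mathbf{z}_{N_j}^{(k)} + \!\sum_{j\in\NNN_i}\!\! F_{ij}\mathbf{y}_{\NNN_j}^{(k)} + 2\eta_i^{(k+1)}\Delta\mathbf{x}_i^{(k+1)} + \rho\,|\NNN_i|\,\mathbf{x}_i^{(k)}.
\end{equation*}
Adding and subtracting $\nabla f_i(\mathbf{x}_i^{(k+1)}) - \nabla\tilde{f}_i(\Delta\mathbf{x}_i^{(k+1)})$ (with a fixed, consistent choice of subgradient on the nonsmooth penalty terms, which share the same outer $|\cdot|$ and $\max(0,\cdot)$ structure in $f_i$ and $\tilde f_i$) converts this into a relation of the form $0 \in \partial f_i(\mathbf{x}_i^{(k+1)}) + (\text{residual})$, which exposes a candidate element of the Lagrangian subdifferential.

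Next I would identify
\begin{equation*}
d_i^{(k+1)} \;:=\; g_i^{(k+1)} \;+\; \sum_{j\in\NNN_i} F_{ij}\mathbf{y}_{\NNN_j}^{(k+1)} \;+\; \rho\sum_{j\in\NNN_i}\!\bigl(\mathbf{x}_i^{(k+1)} - F_{ij}\mathbf{z}_{N_j}^{(k+1)}\bigr),
\end{equation*}
with $g_i^{(k+1)} \in \partial f_i(\mathbf{x}_i^{(k+1)})$ chosen so that $d_i^{(k+1)}$ collects exactly the residual terms above. By construction $d_i^{(k+1)} \in \frac{\partial \bar L_{\rho,i}}{\partial \mathbf{x}_i}(\mathbf{x}_i^{(k+1)},\mathbf{z}_{\NNN_i}^{(k+1)},\mathbf{y}_{\NNN_i}^{(k+1)})$. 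Using the optimality identity, $d_i^{(k+1)}$ decomposes into four pieces: (i) $\nabla f_i(\mathbf{x}_i^{(k+1)}) - \nabla \tilde f_i(\Delta\mathbf{x}_i^{(k+1)})$ on the smooth parts, (ii) the trust-region term $-2\eta_i^{(k+1)}\Delta\mathbf{x}_i^{(k+1)}$, (iii) $\rho\sum_{j\in\NNN_i}F_{ij}(\mathbf{z}_{N_j}^{(k)}-\mathbf{z}_{N_j}^{(k+1)})$, and (iv) $\sum_{j\in\NNN_i}F_{ij}(\mathbf{y}_{\NNN_j}^{(k+1)}-\mathbf{y}_{\NNN_j}^{(k)})$. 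Assumptions~\ref{A-convexity} and \ref{A3.15-scp} give a Lipschitz constant $L_f$ bounding (i) by $L_f\norm{\Delta\mathbf{x}_i^{(k+1)}}$, since both $\tilde{\HHH}_i$ and $\tilde{\mu}_{i,l}$ are first-order models whose gradients agree with those of $\HHH_i$ and $\mu_{i,l}$ at $\mathbf{x}_i^{(k)}$; term (ii) is trivially bounded by $2\bar\eta\norm{\Delta\mathbf{x}_i^{(k+1)}}$ where $\bar\eta$ is an upper bound on the multipliers (finite because the trust-region radius stays bounded away from zero on accepted steps).

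For terms (iii) and (iv) I would invoke the ADMM dual update $\mathbf{y}_{\NNN_i}^{(k+1)}-\mathbf{y}_{\NNN_i}^{(k)} = \rho(\mathbf{x}_{\NNN_i}^{(k+1)}-\mathbf{z}_{\NNN_i}^{(k+1)})$ together with the optimality of the $z$-update \eqref{eq:admm-z-prox}, which (by Lipschitz differentiability of $f_{\NNN_i}$ under Assumption~\ref{A5-admm}) yields $\norm{\mathbf{y}_{\NNN_i}^{(k+1)}-\mathbf{y}_{\NNN_i}^{(k)}} \le \bar L\,\norm{\mathbf{z}_{\NNN_i}^{(k+1)}-\mathbf{z}_{\NNN_i}^{(k)}}$, the standard bound for ADMM on multiplier increments. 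Combining all four bounds, using $\norm{F_{ij}}$ bounded and $\Delta\mathbf{x}_i^{(k+1)} = \mathbf{x}_i^{(k+1)}-\mathbf{x}_i^{(k)}$, gives the desired inequality $\norm{d_i^{(k+1)}} \le b\,\norm{\mathbf{x}_i^{(k+1)}-\mathbf{x}_i^{(k)}}$ for a suitable constant $b > 0$ (the $\mathbf{z}$-increment contribution is absorbed by enlarging the constant when this lemma is later combined in the proof of property P3, or one simply states the bound with an additional $\mathbf{z}$-increment term matching the form in P3).

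The main obstacle will be the careful selection of matching subgradients on the nonsmooth composite terms $|\mu_{i,l}|$, $|h_{i,l}|$, and $\max(0,g_{i,l})$ so that the difference $g_i^{(k+1)} - (\text{corresponding subgradient of }\tilde f_i)$ is a smooth error controlled purely by the Lipschitz constants of $\mu_{i,l}$, $h_{i,l}$, and $g_{i,l}$. The subtlety is that the outer nonsmoothness is identical in $f_i$ and $\tilde f_i$, so a consistent sign/active-set choice makes the nonsmooth contributions cancel, leaving only smooth Lipschitz residuals; verifying that such a choice is always feasible (i.e., that $\mu_{i,l}(\bar\YYY_i^{(k+1)},\XXX_i^{(k+1)})$ and $\tilde\mu_{i,l}(\Delta\YYY_i^{(k+1)},\Delta\XXX_i^{(k+1)})$ have compatible signs up to an error that is itself $O(\norm{\Delta\mathbf{x}_i^{(k+1)}})$) is the delicate step.
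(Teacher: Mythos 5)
The paper offers no proof of this lemma at all---it defers entirely to Condition~3.18 of the cited successive-convexification reference---so your reconstruction (KKT conditions of the convexified trust-region proximal subproblem, then transporting the optimality residual into an element of $\partial f_i(\mathbf{x}_i^{(k+1)})$ via the Lipschitz-gradient hypotheses of Assumptions~\ref{A-convexity} and \ref{A3.15-scp}) is the right architecture and is exactly the argument that citation contains. Your remark that terms (iii) and (iv) leave a residual in $\norm{\mathbf{z}_{\NNN_i}^{(k+1)}-\mathbf{z}_{\NNN_i}^{(k)}}$ and $\norm{\mathbf{y}_{\NNN_i}^{(k+1)}-\mathbf{y}_{\NNN_i}^{(k)}}$ that cannot be absorbed into $b\norm{\mathbf{x}_i^{(k+1)}-\mathbf{x}_i^{(k)}}$ is correct and in fact exposes an imprecision in the lemma as stated (the subdifferential is evaluated at the \emph{updated} $\mathbf{z},\mathbf{y}$ while the bound carries only the $\mathbf{x}$-increment); your proposed repair---keeping those terms and matching the form of P3---is consistent with how the lemma is actually consumed in Lemma~\ref{L10-admm}.

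That said, two steps you assert are precisely where the difficulty lives, and neither is closed. First, the ``consistent subgradient selection'' on $|\mu_{i,l}|$, $|h_{i,l}|$ and $\max(0,g_{i,l})$: if $\mu_{i,l}$ evaluated at $\mathbf{x}_i^{(k+1)}$ and $\tilde\mu_{i,l}$ evaluated at $\Delta\mathbf{x}_i^{(k+1)}$ have strictly opposite signs, the outer subdifferentials are singletons of opposite sign, so the mismatch contributes roughly $2\tau_i\norm{\nabla\mu_{i,l}}$, which is $O(1)$ rather than $O(\norm{\Delta\mathbf{x}_i^{(k+1)}})$; showing this case either cannot occur or still yields the bound is the actual content of the cited Condition~3.18, and you explicitly leave it as ``the delicate step.'' Second, the trust-region term: $\norm{2\eta_i^{(k+1)}\Delta\mathbf{x}_i^{(k+1)}}\le b\norm{\Delta\mathbf{x}_i^{(k+1)}}$ requires a \emph{uniform} bound on the multiplier $\eta_i^{(k+1)}$, which does not follow from the radius staying bounded away from zero---when the constraint $\norm{\Delta\mathbf{x}_i}\le r_i$ is active the KKT system only gives $2\eta_i^{(k+1)} r_i \le C$, i.e.\ an $O(1)$ contribution, so one needs an argument that accepted steps eventually have the trust region inactive (or an explicit multiplier bound). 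Until those two points are supplied, the proposal is a correct skeleton of the intended proof rather than a complete one.
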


The proof follows directly from the proof of Condition 3.18 in \cite{mao2018successive}.

\begin{lemma}
\label{L5-admm}
If $\rho > 4\bar{L} + 2$, then for any iteration $k \in \NN$
\begin{equation}
\label{eq:sufficient-descent-admm} 
\begin{multlined}
L_{\rho, i} ( \mathbf{x}_{\NNN_i}^{(k+1)}, \mathbf{z}_{\NNN_i}^{(k)}, \mathbf{y}_{\NNN_i}^{(k)} ) \\ 
\hspace{-5pt} - L_{\rho, i} ( \mathbf{x}_{\NNN_i}^{(k+1)}, \mathbf{z}_{\NNN_i}^{(k+1)}, \mathbf{y}_{i}^{(k+1)} ) 
\ge \norm{\mathbf{z}_{\NNN_i}^{(k+1)} - \mathbf{z}_{\NNN_i}^{(k)}}^2
\end{multlined}
\end{equation}
\end{lemma}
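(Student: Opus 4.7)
The plan is to decompose the left-hand side of \eqref{eq:sufficient-descent-admm} into the decrease produced by the $\mathbf{z}$-update (with $\mathbf{y}_{\NNN_i}^{(k)}$ held fixed) plus the change produced by the subsequent dual update. The first contribution is positive and the second is negative, and the whole argument hinges on showing that the first dominates the second by at least a unit coefficient once $\rho > 4\bar{L}+2$.

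For the $\mathbf{z}$-block drop, I would exploit that the mapping $\mathbf{z}_{\NNN_i} \mapsto L_{\rho,i}(\mathbf{x}_{\NNN_i}^{(k+1)}, \mathbf{z}_{\NNN_i}, \mathbf{y}_{\NNN_i}^{(k)})$ is the sum of the convex function $f_{\NNN_i}$ (Assumption~\ref{A5-admm}), a linear term, and the $\rho$-strongly convex quadratic $\tfrac{\rho}{2}\norm{\mathbf{x}_{\NNN_i}^{(k+1)}-\mathbf{z}_{\NNN_i}}^2$. Hence the whole objective is $\rho$-strongly convex in $\mathbf{z}_{\NNN_i}$, and since $\mathbf{z}_{\NNN_i}^{(k+1)}$ is its exact minimizer according to \eqref{eq:admm-z-prox}, the standard strong-convexity inequality at the minimizer yields
\begin{equation*}
L_{\rho,i}(\mathbf{x}_{\NNN_i}^{(k+1)},\mathbf{z}_{\NNN_i}^{(k)},\mathbf{y}_{\NNN_i}^{(k)}) - L_{\rho,i}(\mathbf{x}_{\NNN_i}^{(k+1)},\mathbf{z}_{\NNN_i}^{(k+1)},\mathbf{y}_{\NNN_i}^{(k)}) \ge \tfrac{\rho}{2}\norm{\mathbf{z}_{\NNN_i}^{(k+1)}-\mathbf{z}_{\NNN_i}^{(k)}}^2 .
\end{equation*}

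For the dual piece, plugging $\mathbf{y}_{\NNN_i}^{(k+1)}=\mathbf{y}_{\NNN_i}^{(k)}+\rho(\mathbf{x}_{\NNN_i}^{(k+1)}-\mathbf{z}_{\NNN_i}^{(k+1)})$ into the only two $\mathbf{y}$-dependent terms of $L_{\rho,i}$ produces the exact identity $L_{\rho,i}(\mathbf{x}_{\NNN_i}^{(k+1)},\mathbf{z}_{\NNN_i}^{(k+1)},\mathbf{y}_{\NNN_i}^{(k+1)}) - L_{\rho,i}(\mathbf{x}_{\NNN_i}^{(k+1)},\mathbf{z}_{\NNN_i}^{(k+1)},\mathbf{y}_{\NNN_i}^{(k)}) = \tfrac{1}{\rho}\norm{\mathbf{y}_{\NNN_i}^{(k+1)}-\mathbf{y}_{\NNN_i}^{(k)}}^2$. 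The crux of the proof, and the step I expect to be the main obstacle, is bounding this dual increment by the primal increment $\norm{\mathbf{z}_{\NNN_i}^{(k+1)}-\mathbf{z}_{\NNN_i}^{(k)}}$. I would use the first-order optimality condition of the $\mathbf{z}$-subproblem, $0=\nabla f_{\NNN_i}(\mathbf{z}_{\NNN_i}^{(k+1)}) - \mathbf{y}_{\NNN_i}^{(k)} - \rho(\mathbf{x}_{\NNN_i}^{(k+1)}-\mathbf{z}_{\NNN_i}^{(k+1)})$, which combined with the dual update yields $\mathbf{y}_{\NNN_i}^{(k+1)}=\nabla f_{\NNN_i}(\mathbf{z}_{\NNN_i}^{(k+1)})$ and, by induction, $\mathbf{y}_{\NNN_i}^{(k)}=\nabla f_{\NNN_i}(\mathbf{z}_{\NNN_i}^{(k)})$ for every $k\ge 1$. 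The $\bar{L}$-Lipschitz continuity of $\nabla f_{\NNN_i}$ from Assumption~\ref{A5-admm} then gives $\norm{\mathbf{y}_{\NNN_i}^{(k+1)}-\mathbf{y}_{\NNN_i}^{(k)}}\le \bar{L}\,\norm{\mathbf{z}_{\NNN_i}^{(k+1)}-\mathbf{z}_{\NNN_i}^{(k)}}$.

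Combining the two estimates, the total descent is at least $\bigl(\tfrac{\rho}{2}-\tfrac{\bar{L}^2}{\rho}\bigr)\norm{\mathbf{z}_{\NNN_i}^{(k+1)}-\mathbf{z}_{\NNN_i}^{(k)}}^2$. Requiring this coefficient to exceed $1$ reduces to the scalar inequality $\rho^2-2\rho-2\bar{L}^2\ge 0$, i.e.\ $\rho\ge 1+\sqrt{1+2\bar{L}^2}$. The crude estimate $\sqrt{1+2\bar{L}^2}\le 1+2\bar{L}$ then shows that the assumed $\rho > 4\bar{L}+2$ is comfortably sufficient, and \eqref{eq:sufficient-descent-admm} follows.
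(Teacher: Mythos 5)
Your argument is correct and is essentially the paper's own proof: the paper simply defers to Lemma~5 of \cite{wang2019global} (noting $\bar{M}=1$), and your decomposition into the $\mathbf{z}$-block descent plus the dual ascent, controlled via the optimality identity $\mathbf{y}_{\NNN_i}^{(k+1)}=\nabla f_{\NNN_i}(\mathbf{z}_{\NNN_i}^{(k+1)})$ and the $\bar{L}$-Lipschitz continuity of $\nabla f_{\NNN_i}$, is exactly the instantiation of that cited argument in this setting (your use of the convexity of $f_{\NNN_i}$ even yields the slightly cleaner coefficient $\rho/2$ for the primal descent term). The only caveat, shared with the cited proof, is that $\mathbf{y}_{\NNN_i}^{(k)}=\nabla f_{\NNN_i}(\mathbf{z}_{\NNN_i}^{(k)})$ holds only for $k\ge 1$, so the bound on the dual increment is available from the first full iteration onward unless one initializes $\mathbf{y}_{\NNN_i}^{(0)}=\nabla f_{\NNN_i}(\mathbf{z}_{\NNN_i}^{(0)})$.
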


The proof follows directly from the proof of Lemma 5 in \cite{wang2019global}. 
Note that for our problem $\bar{M} = 1$.

\begin{lemma}
\label{L6-admm}
Given the Assumption~\ref{A-coercivity} and if $\rho >  4\bar{L} + 2$, then the sequence $\{ \mathbf{x}_i^{(k)}, \mathbf{z}_{\NNN_i}^{(k)}, \mathbf{y}_{\NNN_i}^{(k)}\}_{i \in \VVV}$ generated by Algorithm~\ref{alg:ADMM-C} satisfies
\begin{enumerate}
\item 
$ \begin{multlined} 
L_\rho \big( \{ \mathbf{x}_{i}^{(k)}, \mathbf{z}_{\NNN_i}^{(k)}, \mathbf{y}_{\NNN_i}^{(k)} \}_{i \in \VVV} \big) \\
\ge L_\rho \big( \{ \mathbf{x}_{i}^{(k+1)}, \mathbf{z}_{\NNN_i}^{(k+1)}, \mathbf{y}_{\NNN_i}^{(k+1)} \}_{i \in \VVV} \big)
\end{multlined} $.
\item $L_\rho \big( \{ \mathbf{x}_{i}^{(k)}, \mathbf{z}_{\NNN_i}^{(k)}, \mathbf{y}_{\NNN_i}^{(k)} \}_{i \in \VVV} \big) $ is lower bounded for all k and converges as $k \to \infty$.
\item $\{ \mathbf{x}_i^{(k)}, \mathbf{z}_{\NNN_i}^{(k)}, \mathbf{y}_{\NNN_i}^{(k)}\}_{i \in \VVV}$ is bounded
\end{enumerate}
\end{lemma}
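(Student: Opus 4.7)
The plan is to establish items (1)--(3) in sequence, with the linchpin being the first-order optimality condition of the $\mathbf{z}$-subproblem \eqref{eq:admm-z-prox} which, combined with the dual update rule, yields the identity $\mathbf{y}_{\NNN_i}^{(k+1)} = \nabla f_{\NNN_i}(\mathbf{z}_{\NNN_i}^{(k+1)})$. This identity plays two roles: it lets us replace the bilinear multiplier terms by smooth evaluations of $f_{\NNN_i}$ (to obtain a lower bound on the Lagrangian), and it lets us bound $\|\mathbf{y}_{\NNN_i}^{(k+1)}-\mathbf{y}_{\NNN_i}^{(k)}\|$ by $\bar{L}\|\mathbf{z}_{\NNN_i}^{(k+1)}-\mathbf{z}_{\NNN_i}^{(k)}\|$ (which is implicitly used inside Lemma~\ref{L5-admm}).

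For item (1), I would split each outer iteration into an $\mathbf{x}$-half-step and a $(\mathbf{z},\mathbf{y})$-half-step. For each agent $i$, Lemma~\ref{L3.17-scp} applied to $\bar{L}_{\rho,i}$ gives
\begin{equation*}
\bar{L}_{\rho,i}(\mathbf{x}_i^{(k)},\mathbf{z}_{\NNN_i}^{(k)},\mathbf{y}_{\NNN_i}^{(k)}) - \bar{L}_{\rho,i}(\mathbf{x}_i^{(k+1)},\mathbf{z}_{\NNN_i}^{(k)},\mathbf{y}_{\NNN_i}^{(k)}) \ge a\|\mathbf{x}_i^{(k+1)}-\mathbf{x}_i^{(k)}\|^2,
\end{equation*}
while Lemma~\ref{L5-admm} provides the analogous descent by $\|\mathbf{z}_{\NNN_i}^{(k+1)}-\mathbf{z}_{\NNN_i}^{(k)}\|^2$ across the $(\mathbf{z},\mathbf{y})$ half-step. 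Because \eqref{eq:admm-lagrang1} and \eqref{eq:admm-lagrang2} agree after summation, summing the two half-decrements over $i\in\VVV$ telescopes to $L_\rho^{(k)} - L_\rho^{(k+1)} \ge \sum_{i\in\VVV}\bigl(a\|\mathbf{x}_i^{(k+1)}-\mathbf{x}_i^{(k)}\|^2 + \|\mathbf{z}_{\NNN_i}^{(k+1)}-\mathbf{z}_{\NNN_i}^{(k)}\|^2\bigr)\ge 0$, which proves item (1) and simultaneously delivers the stronger sufficient-descent estimate needed later for property P2 (with $c_1=\min\{a,1\}$).

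For item (2), I would rewrite each summand of $L_\rho^{(k+1)}$ by substituting $\mathbf{y}_{\NNN_i}^{(k+1)} = \nabla f_{\NNN_i}(\mathbf{z}_{\NNN_i}^{(k+1)})$ into the inner-product term and applying the descent lemma for the $\bar{L}$-smooth $f_{\NNN_i}$, which yields
\begin{equation*}
L_{\rho,i}^{(k+1)} \ge f_i(\mathbf{x}_i^{(k+1)}) + f_{\NNN_i}(\mathbf{x}_{\NNN_i}^{(k+1)}) + \tfrac{\rho-\bar{L}}{2}\|\mathbf{x}_{\NNN_i}^{(k+1)}-\mathbf{z}_{\NNN_i}^{(k+1)}\|^2.
\end{equation*}
Summing over $i$, the first two groups reconstruct the objective of the penalized problem \eqref{eq:gpdmpc-pen}, which is bounded below by Assumption~\ref{A-coercivity} (coercivity on a closed Euclidean space implies boundedness from below), while the last group is nonnegative because $\rho > 4\bar{L}+2 > \bar{L}$. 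Hence $L_\rho^{(k)}$ is uniformly bounded below, and combined with the monotonicity from (1) it is a bounded monotone real sequence, so it converges.

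For item (3), the chain
\begin{equation*}
L_\rho^{(0)} \ge L_\rho^{(k)} \ge \sum_{i\in\VVV}\!\bigl(f_i(\mathbf{x}_i^{(k)}) + f_{\NNN_i}(\mathbf{x}_{\NNN_i}^{(k)})\bigr) + \tfrac{\rho-\bar{L}}{2}\sum_{i\in\VVV}\|\mathbf{x}_{\NNN_i}^{(k)}-\mathbf{z}_{\NNN_i}^{(k)}\|^2
\end{equation*}
bounds the penalized objective along the iterates by the constant $L_\rho^{(0)}$; coercivity then forces $\{\mathbf{x}_i^{(k)}\}_{k}$ to lie in a compact set for every $i\in\VVV$, so each concatenation $\{\mathbf{x}_{\NNN_i}^{(k)}\}$ is bounded. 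The uniform bound on the consensus gap $\|\mathbf{x}_{\NNN_i}^{(k)}-\mathbf{z}_{\NNN_i}^{(k)}\|$ then transfers boundedness to $\{\mathbf{z}_{\NNN_i}^{(k)}\}$, and finally $\mathbf{y}_{\NNN_i}^{(k)}=\nabla f_{\NNN_i}(\mathbf{z}_{\NNN_i}^{(k)})$ together with Lipschitz continuity of $\nabla f_{\NNN_i}$ bounds the multiplier sequence. The main technical obstacle lies in item (2): justifying the identity $\mathbf{y}_{\NNN_i}^{(k+1)}=\nabla f_{\NNN_i}(\mathbf{z}_{\NNN_i}^{(k+1)})$ relies on the $\mathbf{z}$-subproblem being an unconstrained smooth minimization with $f_{\NNN_i}$ differentiable on its full domain (guaranteed by Assumption~\ref{A5-admm} and the form of \eqref{eq:admm-z-prox}), but is worth writing out carefully because every subsequent bound in the convergence proof cascades from it.
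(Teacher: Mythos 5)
Your proof is correct and follows essentially the same route as the paper: Part 1 is obtained exactly as in the paper by summing the descent inequalities of Lemma~\ref{L3.17-scp} (for the $\mathbf{x}$-half-step, via \eqref{eq:admm-lagrang2}) and Lemma~\ref{L5-admm} (for the $(\mathbf{z},\mathbf{y})$-half-step, via \eqref{eq:admm-lagrang1}). For Parts 2 and 3 the paper simply defers to Lemma 6 of \cite{wang2019global}, and your argument --- the optimality identity $\mathbf{y}_{\NNN_i}^{(k+1)}=\nabla f_{\NNN_i}(\mathbf{z}_{\NNN_i}^{(k+1)})$, the descent-lemma lower bound $L_{\rho,i}^{(k+1)}\ge f_i+f_{\NNN_i}+\tfrac{\rho-\bar{L}}{2}\|\mathbf{x}_{\NNN_i}^{(k+1)}-\mathbf{z}_{\NNN_i}^{(k+1)}\|^2$, and coercivity --- is precisely the cited argument written out explicitly.
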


\begin{proof}
Part 1. 
Take the sum of all inequalities in Lemma~\ref{L3.17-scp} and from \eqref{eq:admm-lagrang2}, we have
\begin{equation}
\label{eq:descent1}
\begin{multlined}
L_\rho \big( \{ \mathbf{x}_{i}^{(k)}, \mathbf{z}_{\NNN_i}^{(k)}, \mathbf{y}_{\NNN_i}^{(k)} \}_{i \in \VVV} \big) \\
- L_\rho \big( \{ \mathbf{x}_{i}^{(k+1)}, \mathbf{z}_{\NNN_i}^{(k)}, \mathbf{y}_{\NNN_i}^{(k)} \}_{i \in \VVV} \big) \\
\ge a \sum_{i \in \VVV} \norm{\mathbf{x}_i^{(k+1)} - \mathbf{x}_i^{(k)}}^2
\ge 0
\end{multlined}
\end{equation}

Take the sum of all inequalities in Lemma~\ref{L5-admm} and from \eqref{eq:admm-lagrang1}, we have
\begin{equation}
\label{eq:descent2}
\begin{multlined}
L_\rho \big( \{ \mathbf{x}_{i}^{(k+1)}, \mathbf{z}_{\NNN_i}^{(k)}, \mathbf{y}_{\NNN_i}^{(k)} \}_{i \in \VVV} \big) \\
- L_\rho \big( \{ \mathbf{x}_{i}^{(k+1)}, \mathbf{z}_{\NNN_i}^{(k+1)}, \mathbf{y}_{\NNN_i}^{(k+1)} \}_{i \in \VVV} \big) \\
\ge \sum_{i \in \VVV} \norm{\mathbf{z}_{\NNN_i}^{(k+1)} - \mathbf{z}_{\NNN_i}^{(k)}}^2
\ge 0
\end{multlined}
\end{equation}

As a result, we obtain
$L_\rho \big( \{ \mathbf{x}_{i}^{(k)}, \mathbf{z}_{\NNN_i}^{(k)}, \mathbf{y}_{\NNN_i}^{(k)} \}_{i \in \VVV} \big) 
\ge L_\rho \big( \{ \mathbf{x}_{i}^{(k+1)}, \mathbf{z}_{\NNN_i}^{(k+1)}, \mathbf{y}_{\NNN_i}^{(k+1)} \}_{i \in \VVV} \big)$. 

Part 2. Follows the proof of Lemma 6, part 2 in \cite{wang2019global}

Part 3. Follows the proof of Lemma 6, part 3 in \cite{wang2019global}. 
\end{proof}

From part 2 and part 3 of Lemma~\ref{L6-admm}, the boundeness property P1 holds.

\begin{lemma}
\label{L9-admm}
If $\rho >  4\bar{L} + 2$ then Algorithm~\ref{alg:ADMM-C} satisfies the sufficient descent property P2.
\end{lemma}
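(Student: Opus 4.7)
The plan is to derive property P2 by stringing together the two one-step descents that were already established in the proof of Lemma~\ref{L6-admm} Part~1. Recall that a full ADMM-C iteration consists of the $x$-update (solved approximately by the SCP step), followed by the $z$-update, followed by the dual $y$-update. We will bound the drop in $L_\rho$ across each of these sub-steps and then telescope.

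First, summing the SCP descent inequality from Lemma~\ref{L3.17-scp} over all $i \in \VVV$, we obtain, just as in \eqref{eq:descent1},
\begin{equation*}
L_\rho\big(\{\mathbf{x}_{i}^{(k)}, \mathbf{z}_{\NNN_i}^{(k)}, \mathbf{y}_{\NNN_i}^{(k)}\}\big) - L_\rho\big(\{\mathbf{x}_{i}^{(k+1)}, \mathbf{z}_{\NNN_i}^{(k)}, \mathbf{y}_{\NNN_i}^{(k)}\}\big) \ge a \sum_{i \in \VVV} \norm{\mathbf{x}_{i}^{(k+1)} - \mathbf{x}_{i}^{(k)}}^2.
\end{equation*}
Next, provided $\rho > 4\bar{L} + 2$, summing Lemma~\ref{L5-admm} over $i \in \VVV$ gives
\begin{equation*}
L_\rho\big(\{\mathbf{x}_{i}^{(k+1)}, \mathbf{z}_{\NNN_i}^{(k)}, \mathbf{y}_{\NNN_i}^{(k)}\}\big) - L_\rho\big(\{\mathbf{x}_{i}^{(k+1)}, \mathbf{z}_{\NNN_i}^{(k+1)}, \mathbf{y}_{\NNN_i}^{(k+1)}\}\big) \ge \sum_{i \in \VVV} \norm{\mathbf{z}_{\NNN_i}^{(k+1)} - \mathbf{z}_{\NNN_i}^{(k)}}^2,
\end{equation*}
which already folds the effect of the $y$-update into the $z$-descent (this is the role played by the lower bound $\rho > 4\bar L + 2$, as in the proof of Lemma~5 of \cite{wang2019global}).

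Adding these two inequalities makes the intermediate term $L_\rho(\{\mathbf{x}_{i}^{(k+1)}, \mathbf{z}_{\NNN_i}^{(k)}, \mathbf{y}_{\NNN_i}^{(k)}\})$ cancel, yielding
\begin{equation*}
L_\rho\big(\{\mathbf{x}_{i}^{(k)}, \mathbf{z}_{\NNN_i}^{(k)}, \mathbf{y}_{\NNN_i}^{(k)}\}\big) - L_\rho\big(\{\mathbf{x}_{i}^{(k+1)}, \mathbf{z}_{\NNN_i}^{(k+1)}, \mathbf{y}_{\NNN_i}^{(k+1)}\}\big) \ge \sum_{i \in \VVV} \Big( a \norm{\mathbf{x}_{i}^{(k+1)} - \mathbf{x}_{i}^{(k)}}^2 + \norm{\mathbf{z}_{\NNN_i}^{(k+1)} - \mathbf{z}_{\NNN_i}^{(k)}}^2 \Big).
\end{equation*}
Taking $c_1 := \min(a, 1) > 0$ immediately produces the descent bound required by P2.

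Since the two ingredients (Lemmas~\ref{L3.17-scp} and~\ref{L5-admm}) are already in hand, there is no serious obstacle; the only point to emphasize is that the condition $\rho > 4\bar{L} + 2$ is needed to invoke Lemma~\ref{L5-admm}, and that the $y$-update does not contribute a separate term because it has already been absorbed into the $z$-step bound. The constant $a$ comes from the trust-region acceptance rule of the SCP stage (hence its positivity holds uniformly in $k$ under the adaptation rule in Remark~\ref{rmk-rules}).
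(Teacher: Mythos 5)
Your proposal is correct and follows essentially the same route as the paper: it chains the summed inequalities \eqref{eq:descent1} and \eqref{eq:descent2} (from Lemmas~\ref{L3.17-scp} and~\ref{L5-admm}), cancels the intermediate Lagrangian value, and sets $c_1 = \min(a,1)$. The additional remarks on the role of $\rho > 4\bar{L}+2$ and the absorption of the $y$-update are accurate but not a departure from the paper's argument.
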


\begin{proof}
From \eqref{eq:descent1} and \eqref{eq:descent2}, we have
\begin{equation}
\begin{split}
& 
\begin{multlined}
L_\rho \big( \{ \mathbf{x}_{i}^{(k)}, \mathbf{z}_{\NNN_i}^{(k)}, \mathbf{y}_{\NNN_i}^{(k)} \}_{i \in \VVV} \big) \\
- L_\rho \big( \{ \mathbf{x}_{i}^{(k+1)}, \mathbf{z}_{\NNN_i}^{(k+1)}, \mathbf{y}_{\NNN_i}^{(k+1)} \}_{i \in \VVV} \big)
\end{multlined} \\
& \ge a \sum_{i \in \VVV} \norm{\mathbf{x}_i^{(k+1)} - \mathbf{x}_i^{(k)}}^2 +\sum_{i \in \VVV} \norm{\mathbf{z}_{\NNN_i}^{(k+1)} - \mathbf{z}_{\NNN_i}^{(k)}}^2 \\
& 
\begin{multlined}
\ge c_1 \sum_{i \in \VVV} \Big ( \norm{\mathbf{x}_i^{(k+1)} - \mathbf{x}_i^{(k)}}^2 %
+ \norm{\mathbf{z}_{\NNN_i}^{(k+1)} - \mathbf{z}_{\NNN_i}^{(k)}}^2 \Big) 
\end{multlined} 
\end{split}
\end{equation} 
where $c_1 = \operatorname{min} (a, 1)$.
Therefore the sufficient descent property P2 holds. \end{proof}

\begin{lemma}
\label{L10-admm}
Algorithm~\ref{alg:ADMM-C} satisfies the subgradient bound property P3.
\end{lemma}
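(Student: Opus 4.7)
The plan is to construct an explicit element $d^{(k+1)} \in \partial L_\rho(\{\mathbf{x}_i^{(k+1)}, \mathbf{z}_{\NNN_i}^{(k+1)}, \mathbf{y}_{\NNN_i}^{(k+1)}\}_{i\in\VVV})$ block-by-block, where the three blocks are the partial (sub)gradients in $\mathbf{x}_i$, in $\mathbf{z}_{\NNN_i}$ and in $\mathbf{y}_{\NNN_i}$, for each $i \in \VVV$, and then bound each block by the successive-iterate differences appearing in P3.

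For the $\mathbf{x}_i$-block, I would directly invoke Lemma~\ref{L3.18-scp}: the accepted SCP step delivers $d_{x_i}^{(k+1)} \in \partial_{\mathbf{x}_i}\bar{L}_{\rho,i}(\mathbf{x}_i^{(k+1)}, \mathbf{z}_{\NNN_i}^{(k+1)}, \mathbf{y}_{\NNN_i}^{(k+1)})$ with $\norm{d_{x_i}^{(k+1)}} \le b\,\norm{\mathbf{x}_i^{(k+1)}-\mathbf{x}_i^{(k)}}$. For the $\mathbf{y}_{\NNN_i}$-block, $\partial_{\mathbf{y}_{\NNN_i}} L_\rho = \mathbf{x}_{\NNN_i}^{(k+1)} - \mathbf{z}_{\NNN_i}^{(k+1)}$, which the dual update \eqref{eq:admm-step} rewrites as $(\mathbf{y}_{\NNN_i}^{(k+1)}-\mathbf{y}_{\NNN_i}^{(k)})/\rho$.

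The central step is the $\mathbf{z}_{\NNN_i}$-block. Writing the first-order optimality condition of the $z$-update \eqref{eq:admm-z} and using the dual update gives $\nabla f_{\NNN_i}(\mathbf{z}_{\NNN_i}^{(k+1)}) = \mathbf{y}_{\NNN_i}^{(k)} + \rho(\mathbf{x}_{\NNN_i}^{(k+1)}-\mathbf{z}_{\NNN_i}^{(k+1)}) = \mathbf{y}_{\NNN_i}^{(k+1)}$. Substituting into the partial gradient of $L_\rho$ at the $(k{+}1)$-th iterate, the $\mathbf{z}_{\NNN_i}$-block of $d^{(k+1)}$ equals $\nabla f_{\NNN_i}(\mathbf{z}_{\NNN_i}^{(k+1)}) - \mathbf{y}_{\NNN_i}^{(k+1)} - \rho(\mathbf{x}_{\NNN_i}^{(k+1)}-\mathbf{z}_{\NNN_i}^{(k+1)}) = \mathbf{y}_{\NNN_i}^{(k)} - \mathbf{y}_{\NNN_i}^{(k+1)}$. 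Applying the same identity at iteration $k$ gives $\nabla f_{\NNN_i}(\mathbf{z}_{\NNN_i}^{(k)}) = \mathbf{y}_{\NNN_i}^{(k)}$, so the Lipschitz continuity of $\nabla f_{\NNN_i}$ with universal constant $\bar{L}$ (Assumption~\ref{A5-admm}) yields $\norm{\mathbf{y}_{\NNN_i}^{(k+1)} - \mathbf{y}_{\NNN_i}^{(k)}} \le \bar{L}\,\norm{\mathbf{z}_{\NNN_i}^{(k+1)}-\mathbf{z}_{\NNN_i}^{(k)}}$. This single inequality bounds both the $\mathbf{z}_{\NNN_i}$-block (by $\bar L$) and the $\mathbf{y}_{\NNN_i}$-block (by $\bar L/\rho$) in terms of $\norm{\mathbf{z}_{\NNN_i}^{(k+1)}-\mathbf{z}_{\NNN_i}^{(k)}}$.

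Combining the three blockwise estimates and summing over $i \in \VVV$ with the triangle inequality, any constant $c_2 \ge \max\{b,\, \bar L,\, \bar L/\rho\}$ establishes P3. The main obstacle I anticipate is the $\mathbf{z}$-block: one must carefully distinguish the subgradient furnished by optimality of the $z$-update, which is taken at the \emph{old} dual $\mathbf{y}^{(k)}$, from the subgradient at the \emph{new} iterate involving $\mathbf{y}^{(k+1)}$, and recognize that their difference collapses to $\mathbf{y}_{\NNN_i}^{(k)}-\mathbf{y}_{\NNN_i}^{(k+1)}$. Once this identification is made, Lipschitz differentiability of $f_{\NNN_i}$ immediately converts it into the required bound by $\norm{\mathbf{z}_{\NNN_i}^{(k+1)}-\mathbf{z}_{\NNN_i}^{(k)}}$.
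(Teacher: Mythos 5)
Your proposal is correct and follows essentially the same route as the paper: the same blockwise decomposition of $\partial L_\rho$, Lemma~\ref{L3.18-scp} for the $\mathbf{x}_i$-block, and the bounds $\bar L\,\norm{\mathbf{z}_{\NNN_i}^{(k+1)}-\mathbf{z}_{\NNN_i}^{(k)}}$ and $(\bar L/\rho)\norm{\mathbf{z}_{\NNN_i}^{(k+1)}-\mathbf{z}_{\NNN_i}^{(k)}}$ for the $\mathbf{z}$- and $\mathbf{y}$-blocks, combined with $c_2=\max(b,\bar L,\bar L/\rho)$. The only difference is that you derive the latter two bounds explicitly from the $z$-update optimality condition and the dual update (via $\nabla f_{\NNN_i}(\mathbf{z}_{\NNN_i}^{(k+1)})=\mathbf{y}_{\NNN_i}^{(k+1)}$), whereas the paper simply imports them as Eqs.~(28)--(29) of \cite{wang2019global}.
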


\begin{proof}
We have 
\begin{equation}
\begin{multlined}
\partial L_\rho \big( \{ \mathbf{x}_{i}^{(k+1)}, \mathbf{z}_{\NNN_i}^{(k+1)}, \mathbf{y}_{\NNN_i}^{(k+1)} \}_{i \in \VVV} \big) \\ 
= \left( \left\{ \frac{\partial L_\rho}{\mathbf{x}_{i}} \right\}_{i \in \VVV}, \{ \nabla_{\mathbf{z}_{\NNN_i}} L_\rho \}_{i \in \VVV}, \{\nabla_{\mathbf{y}_{\NNN_i}} L_\rho \}_{i \in \VVV} \right)
\end{multlined}
\end{equation}

Hence follows the proof of Lemma 10 in \cite{wang2019global}, we need to show for each $i \in \VVV$, there exists a constant $b > 0$ and $d_i^{k+1} \in \frac{\partial \bar{L}_{\rho,i}}{\partial \mathbf{x}_i} \big( \mathbf{x}_{i}^{(k+1)}, \mathbf{z}_{\NNN_i}^{(k+1)}, \mathbf{y}_{\NNN_i}^{(k+1)} \big)$ 
such that
\begin{equation}
\label{eq:P3-cond1}
\norm{d_i^{k+1}} \le b \norm{\mathbf{x}_i^{(k+1)} - \mathbf{x}_i^{(k)}}^2
\end{equation}
and 
\begin{equation}
\label{eq:P3-cond2}
\begin{multlined}
\nabla_{\mathbf{y}_{\NNN_i}} L_{\rho,i} (\mathbf{x}_{i}^{(k+1)}, \mathbf{z}_{\NNN_i}^{(k+1)}, \mathbf{y}_{\NNN_i}^{(k+1)}) \\
\le \frac{\bar{L}}{\rho} \norm{\mathbf{z}_{\NNN_i}^{(k+1)} - \mathbf{z}_{\NNN}^{(k)}} 
\end{multlined}
\end{equation}
\begin{equation}
\label{eq:P3-cond3}
\begin{multlined}
\nabla_{\mathbf{z}_{\NNN_i}} L_{\rho,i} (\mathbf{x}_{i}^{(k+1)}, \mathbf{z}_{\NNN_i}^{(k+1)}, \mathbf{y}_{\NNN_i}^{(k+1)}) \\
\le \bar{L} \norm{\mathbf{z}_{\NNN_i}^{(k+1)} - \mathbf{z}_{\NNN_i}^{(k)}} 
\end{multlined}
\end{equation}

The inequality \eqref{eq:P3-cond1} is given by Lemma~\ref{L3.18-scp}, while \eqref{eq:P3-cond2} and \eqref{eq:P3-cond3} was proven in \cite{wang2019global} (Eq. (28) and (29), respectively). 
As a result, there exists $d^{(k+1)} \in \partial L_\rho ( \{ \mathbf{x}_{i}^{(k+1)}, \mathbf{z}_{\NNN_i}^{(k+1)}, \mathbf{y}_{\NNN_i}^{(k+1)} \}_{i \in \VVV} )$ such that
\begin{equation}
\begin{multlined}
\norm{d^{(k+1)}} \le c_2 \sum_{i \in \VVV} \Big( \norm{\mathbf{x}_i^{(k+1)} - \mathbf{x}_i^{(k)}} \\
+ \norm{\mathbf{z}_{\NNN_i}^{(k+1)} - \mathbf{z}_{\NNN_i}^{(k)}} \Big)
\end{multlined}
\end{equation}
where $c_2 = \operatorname{max} (b, \bar{L}/\rho, \bar{L})$.
The proof is therefore completed. \end{proof}

\begin{lemma}
\label{LP4-admm}
Algorithm~\ref{alg:ADMM-C} satisfies the limiting continuity property P4.
\end{lemma}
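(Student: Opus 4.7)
The plan is to establish property P4 by showing that $L_\rho$ is (jointly) continuous in all of its arguments at the limit point $\{\mathbf{x}_i^{*}, \mathbf{z}_{\NNN_i}^{*}, \mathbf{y}_{\NNN_i}^{*}\}_{i\in\VVV}$, so that the standard sequential characterization of continuity gives the desired identity. Since $L_\rho = \sum_{i\in\VVV} L_{\rho,i}$ and a finite sum of continuous functions is continuous, it suffices to show that each summand $L_{\rho,i}$ is continuous.

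First I would decompose $L_{\rho,i}$ into its constituent parts: $f_i(F_{ii}\mathbf{x}_{\NNN_i}) + f_{\NNN_i}(\mathbf{z}_{\NNN_i}) + \mathbf{y}_{\NNN_i}^T(\mathbf{x}_{\NNN_i}-\mathbf{z}_{\NNN_i}) + \tfrac{\rho}{2}\norm{\mathbf{x}_{\NNN_i}-\mathbf{z}_{\NNN_i}}_2^2$. The inner-product and quadratic terms are manifestly continuous. For $f_{\NNN_i}$, Assumption~\ref{A5-admm} gives that $J_{\NNN_i}$ is continuous, and since $f_{\NNN_i} = J_{\NNN_i}$, continuity is immediate. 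For $f_i$, I would inspect each of the four pieces in its definition: (a) $J_i$ is continuous by Assumption~\ref{A-convexity}; (b) $\HHH_i(\XXX_i)=\operatorname{log\,det}\bigl(\mathbf{\Sigma}_{\GPModel_f}([\XXX_i])\bigr)$ is continuous by Assumption~\ref{A3.15-scp} (Lipschitz differentiability implies continuity on its domain, and the posterior covariance remains strictly positive-definite along the bounded iterate sequence by Lemma~\ref{L6-admm}, so the log-determinant is well-defined); (c) the exact-penalty terms $\sum_l|\mu_{i,l}|$ and $\sum_l|h_{i,l}|$ are continuous as compositions of $|\cdot|$ with the continuous maps $\mu_{i,l}$ (Assumption~\ref{A3.15-scp}) and $h_{i,l}$ (Assumption~\ref{A-convexity}); (d) $\sum_l\max(0,g_{i,l})$ is continuous since $\max(0,\cdot)$ is continuous and $g_{i,l}$ is continuous by Assumption~\ref{A-convexity}. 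Summing these, $f_i$ is continuous, and therefore so is $L_{\rho,i}$.

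Having established joint continuity of $L_\rho$, the conclusion follows by the definition of continuity: for the convergent sub-sequence $\{\mathbf{x}_i^{(k_s)}, \mathbf{z}_{\NNN_i}^{(k_s)}, \mathbf{y}_{\NNN_i}^{(k_s)}\}_{i\in\VVV} \to \{\mathbf{x}_i^{*}, \mathbf{z}_{\NNN_i}^{*}, \mathbf{y}_{\NNN_i}^{*}\}_{i\in\VVV}$ as $s\to\infty$ (which exists because the full sequence is bounded by Lemma~\ref{L6-admm}),
\[
\lim_{s\to\infty} L_\rho\bigl(\{\mathbf{x}_i^{(k_s)}, \mathbf{z}_{\NNN_i}^{(k_s)}, \mathbf{y}_{\NNN_i}^{(k_s)}\}_{i\in\VVV}\bigr) = L_\rho\bigl(\{\mathbf{x}_i^{*}, \mathbf{z}_{\NNN_i}^{*}, \mathbf{y}_{\NNN_i}^{*}\}_{i\in\VVV}\bigr),
\]
which is exactly property P4.

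The only subtle point I would need to be careful about is item~(b) above: the $\operatorname{log\,det}$ term is only continuous where its argument is positive-definite. I would handle this by appealing to the boundedness of the iterate sequence (Lemma~\ref{L6-admm}) together with Assumption~\ref{A3.15-scp}, which implicitly restricts attention to the domain where $\HHH_i$ is well-defined and Lipschitz differentiable; on any such compact neighborhood of the limit point, continuity of $\HHH_i$ is guaranteed. This is the main (and essentially only) technical obstacle; the remaining pieces are routine continuity of sums, products, absolute values, and $\max(0,\cdot)$.
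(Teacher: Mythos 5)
Your proof is correct and takes essentially the same approach as the paper: the paper's own proof is a one-line observation that $L_\rho$ is continuous by Assumptions~\ref{A5-admm} and \ref{A3.15-scp} and then invokes the corresponding argument from the cited reference, whereas you simply make the continuity of each constituent of $L_{\rho,i}$ explicit before applying sequential continuity. Your extra care about the domain of the $\operatorname{log\,det}$ term is a reasonable elaboration of what the paper leaves implicit in Assumption~\ref{A3.15-scp}.
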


\begin{proof}
Since $L_\rho$ is continuous due to Assumptions~\ref{A5-admm} and \ref{A3.15-scp}, the proof of P4 in \cite{wang2019global} for a general case with lower semicontinuous function can be applied. \end{proof}

From Lemmas~\ref{L6-admm}, \ref{L9-admm}, \ref{L10-admm}, and \ref{LP4-admm} that guarantee the properties P1-P4, we are now able to state a theorem on the convergence property of the ADMM-C algorithm.

\begin{theorem}
\label{theo:convergence}
Suppose that Assumptions 1-5 hold.
If $\rho >  4\bar{L} + 2$ then Algorithm~\ref{alg:ADMM-C} generates a sequence that is bounded, has at least one limit point, and each limit point $\{ \mathbf{x}_i^{*}, \mathbf{z}_{\NNN_i}^{*}, \mathbf{y}_{\NNN_i}^{*}\}_{i \in \VVV}$ is a stationary point of $L_\rho$. 
Note that this stationary point is not globally unique unless $L_\rho$ is a Kurdyka-Lojasiewicz (KL) function \cite{attouch2013convergence}.
\end{theorem}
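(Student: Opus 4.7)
The plan is to assemble the theorem directly from the four key properties already established in Lemmas~\ref{L6-admm}, \ref{L9-admm}, \ref{L10-admm}, and \ref{LP4-admm}, mirroring the master convergence argument for nonconvex ADMM in \cite{wang2019global}. Nothing new has to be proven; the task is only to patch together the implications in the correct order and interpret them at a limit point.

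First, I would invoke P1 (Lemma~\ref{L6-admm}): boundedness of the iterates $\{\mathbf{x}_i^{(k)},\mathbf{z}_{\NNN_i}^{(k)},\mathbf{y}_{\NNN_i}^{(k)}\}_{i\in\VVV}$ combined with the Bolzano-Weierstrass theorem yields existence of at least one convergent subsequence and hence at least one limit point $\{\mathbf{x}_i^{*},\mathbf{z}_{\NNN_i}^{*},\mathbf{y}_{\NNN_i}^{*}\}_{i\in\VVV}$. The lower-boundedness of $L_\rho$ along the iterates combined with the sufficient descent from Lemma~\ref{L9-admm} gives the telescoping bound
\[ \sum_{k=0}^{\infty} \sum_{i \in \VVV} \Bigl( \norm{\mathbf{x}_i^{(k+1)} - \mathbf{x}_i^{(k)}}^2 + \norm{\mathbf{z}_{\NNN_i}^{(k+1)} - \mathbf{z}_{\NNN_i}^{(k)}}^2 \Bigr) < \infty, \]
so that $\norm{\mathbf{x}_i^{(k+1)} - \mathbf{x}_i^{(k)}}$ and $\norm{\mathbf{z}_{\NNN_i}^{(k+1)} - \mathbf{z}_{\NNN_i}^{(k)}}$ vanish as $k \to \infty$ for every agent $i$.

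Next, I would chain this with P3 from Lemma~\ref{L10-admm}, which produces $d^{(k+1)} \in \partial L_\rho\big(\{\mathbf{x}_i^{(k+1)},\mathbf{z}_{\NNN_i}^{(k+1)},\mathbf{y}_{\NNN_i}^{(k+1)}\}_{i\in\VVV}\big)$ whose norm is controlled by a constant multiple of exactly those successive differences; therefore $\norm{d^{(k+1)}} \to 0$. Restricting to a subsequence $\{k_s\}$ along which the iterates converge to the limit point, the limiting continuity P4 from Lemma~\ref{LP4-admm} gives $L_\rho(\cdot^{(k_s)}) \to L_\rho\big(\{\mathbf{x}_i^{*},\mathbf{z}_{\NNN_i}^{*},\mathbf{y}_{\NNN_i}^{*}\}_{i\in\VVV}\big)$. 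Applying the closedness of the limiting (Clarke) subdifferential of the lower semicontinuous function $L_\rho$ at the limit point then yields $0 \in \partial L_\rho\big(\{\mathbf{x}_i^{*},\mathbf{z}_{\NNN_i}^{*},\mathbf{y}_{\NNN_i}^{*}\}_{i\in\VVV}\big)$, which is the definition of a stationary point.

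The delicate issue of the argument is the nonsmoothness of $f_i$, which contains the exact-penalty terms (absolute values and the $\max$ operator) and makes $L_\rho$ nondifferentiable in the $\mathbf{x}_i$ block; consequently, all subgradient manipulations must be carried out with the limiting subdifferential rather than with ordinary gradients, and one must verify that the closedness of $\partial L_\rho$ under limits is preserved. The smoothness of $f_{\NNN_i}$ guaranteed by Assumption~\ref{A5-admm} ensures that the $\mathbf{z}$- and $\mathbf{y}$-components of $\partial L_\rho$ reduce to single-valued gradients, so closedness reduces to a standard continuity statement. The final remark of the theorem is not a consequence of P1--P4; it would require $L_\rho$ to satisfy the Kurdyka-Lojasiewicz property, in which case the argument of \cite{attouch2013convergence} upgrades subsequential convergence to convergence of the entire sequence to a single stationary point.
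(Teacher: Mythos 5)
Your proposal is correct and follows essentially the same route as the paper, which simply delegates the assembly of properties P1--P4 to the proof of Proposition~2 in \cite{wang2019global}; you have merely written out explicitly the standard chain (boundedness $\Rightarrow$ limit points, descent plus lower bound $\Rightarrow$ vanishing successive differences, subgradient bound $\Rightarrow$ vanishing subgradients, limiting continuity plus closedness of the subdifferential $\Rightarrow$ stationarity) that the cited proposition carries out. Your closing observations---that the nonsmooth penalty terms force the use of the limiting subdifferential in the $\mathbf{x}$-block and that the KL remark is a separate upgrade, not a consequence of P1--P4---are accurate and consistent with the paper's treatment.
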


The proof follows directly from the proof of Proposition 2 in \cite{wang2019global}.

\end{document}